\documentclass[journal]{IEEEtran}
\usepackage{lineno}
\usepackage{cite}
\usepackage{hyperref}
\usepackage[tbtags]{amsmath}
\usepackage{amssymb,amsfonts}
\usepackage{amsthm}
\usepackage{amsbsy}

\usepackage{graphicx}
\usepackage{textcomp}
\usepackage{xcolor}
\usepackage{float}
\usepackage{mathrsfs}
\usepackage{mathtools}
\usepackage{multirow}
\usepackage{algpseudocode}
\usepackage{comment}
\usepackage[inline, shortlabels]{enumitem}
\makeatletter
\newif\if@restonecol

\usepackage{algorithm}

\usepackage{caption}
\usepackage{subcaption}
\usepackage{ragged2e}

\usepackage{dsfont}
\usepackage{bbm}
\usepackage{stfloats}
\usepackage{bm}
\usepackage{orcidlink}

\newtheorem{corollary}{Corollary}

\theoremstyle{definition}
\newtheorem{theorem}{Theorem}

\newtheorem{lemma}{Lemma}

\newtheorem{proposition}{Proposition}

\newcommand{\biggg}{\bBigg@{3}}
\newcommand{\Biggg}{\bBigg@{3.5}}
\makeatother
\begin{document}

\title{ Rotatable Antenna-Enabled Near-Field Integrated Sensing and Communication}

\author{Zequan~Wang, Liang~Yin, Yitong~Liu, and Hongwen~Yang

\thanks{Zequan~Wang, Liang~Yin, Yitong~Liu, and Hongwen~Yang are with the School of Information and Communication Engineering, Beijing University of Posts and Telecommunications, Beijing, 100876, China (e-mail: \{zequanwang, YinL, liuyitong, yanghong\}@bupt.edu.cn).}

\thanks{(Corresponding author: Liang~Yin)}
}

\maketitle

\begin{abstract}
In this paper, we propose leveraging rotatable antennas (RAs) to enhance
near-field communication and sensing by exploiting a new
orientation-domain spatial degree-of-freedom (DoF) provided by element-wise
antenna rotation. Specifically, we investigate an RA-enabled near-field
integrated sensing and communication (ISAC) system with sub-connected
hybrid beamforming, where each transmit RA can
independently adjust its boresight direction under a practical rotation
constraint. A spherical-wave channel model incorporating
orientation-dependent antenna gains is established to characterize
multi-user communication and target sensing in the presence of clutters.
Based on this model, a weighted communication-sensing utility
maximization problem is formulated by jointly optimizing the receive
beamformer, digital beamformer, analog beamformer, and RA
boresight directions. To solve the resulting non-convex problem, an
alternating optimization algorithm is developed by combining fractional
programming, Riemannian optimization, and a spherical-cap
Frank--Wolfe-based boresight update. To further understand the impact of
RA rotation on near-field sensing, we derive a closed-form root Cramér--Rao bound
(RCRB) expression. Simulation results demonstrate the convergence and
effectiveness of the proposed algorithm. It is shown that the RA-enabled hybrid design can match or even outperform the
fully-digital FPA benchmark in some regimes, indicating that the
orientation-domain DoF introduced by element-wise rotation can compensate
for limited RF chains. The RCRB and beampattern results
further show that RA rotation improves off-broadside sensing accuracy,
enhances range-domain focusing, and suppresses same-angle clutters in the
near field.

\end{abstract}

\begin{IEEEkeywords}
Rotatable antenna, near-field integrated sensing and communication, hybrid beamforming.
\end{IEEEkeywords}

\section{Introduction}
Integrated sensing and communication (ISAC) has been widely recognized as a key enabling technology for sixth-generation (6G) wireless networks, as it allows communication and radar sensing functionalities to share spectrum resources, hardware platforms, and transmitted waveforms \cite{Liu2018},\cite{Liu2022},\cite{Liu2020}. ISAC has also been identified by the International Telecommunication Union (ITU) as one of the representative usage scenarios for future 6G systems \cite{ITU}. 
To support high-rate transmission and fine sensing resolution, future ISAC systems are expected to exploit millimeter-wave/terahertz bands and extremely large-scale multiple-input multiple-output (XL-MIMO) arrays. The resulting short wavelengths and large apertures significantly enlarge the Rayleigh distance, making near-field effects non-negligible in practical base-station (BS) deployments \cite{Lu2022,Cui2022}.
Unlike the far-field plane-wave model, near-field spherical-wave propagation yields range-dependent array responses. Hence, beamforming can be performed over both angular and radial dimensions, providing range-resolved spatial selectivity for localized target illumination and the discrimination of objects with similar angular directions but different ranges in ISAC \cite{Wang2023}.

Despite the range-domain selectivity of near-field propagation, near-field ISAC remains challenging due to the tightly coupled resource sharing and interference control. A dual-functional BS reuses the same power budget, antenna aperture, RF hardware, and waveform resources for multi-user downlink transmission and target sensing.
Fully-digital beamforming provides high spatial flexibility, but assigning one RF chain to each antenna element is impractical for XL-MIMO arrays due to the excessive cost, power consumption, and calibration overhead. Hybrid beamforming has therefore been widely adopted to reduce the number of RF chains \cite{Yu2016,S2019}. In particular, sub-connected architectures, where each RF chain drives a disjoint antenna subarray, require much fewer phase shifters than fully-connected networks and are attractive for large-scale implementation \cite{Du2018,WangX2022,Zhu2024}.
However, the resulting block-diagonal constant-modulus analog precoder and the limited RF-chain dimension restrict the available spatial degrees-of-freedom (DoFs). For conventional fixed-position antenna (FPA) arrays, whose element positions and radiation boresights are predetermined, such DoF loss is difficult to compensate without increasing the number of antennas or RF chains.

To overcome the limited spatial adaptability of FPA arrays, reconfigurable antenna technologies have recently attracted increasing attention. Movable antennas (MAs) \cite{ZhuL20241,ZhuL20242,Ding2025,Lyu2025} and fluid antenna systems (FASs) \cite{D2026} reshape wireless channels through position-domain reconfigurability, and have shown performance gains in communication and sensing. Six-dimensional movable antenna (6DMA) systems further combine position and orientation reconfiguration, enabling the array response and radiation characteristics to adapt to different propagation environments \cite{Shao2026,Shao2025,Shao20251,Shao20252,Shao20253}. However, these gains usually require translational movement, additional deployment space, and non-negligible mechanical control overhead \cite{{Li2024}}. 

As a simplified realization of 6DMAs, rotatable antennas (RAs) retain orientation reconfigurability while avoiding translational movement \cite{Zheng2026}. Specifically, the three-dimensional (3D) boresight of each antenna can be independently adjusted via mechanical or electronic means, requiring only local rotation and thus facilitating compact-array implementation and compatibility with existing RF front-ends. By aligning directional radiation patterns with desired propagation directions and reducing the gain toward undesired ones, RAs provide an additional orientation-domain DoF. 
Depending on the rotation granularity, RA designs can be broadly classified into array-wise and element-wise configurations \cite{Zheng2026}. 
In array-wise RA designs, the antenna array follows a common orientation configuration, which is simple to implement but provides limited orientation flexibility. In contrast, element-wise RA designs allow individual antennas to adjust their boresights independently, offering richer orientation-domain DoFs for reshaping the effective spatial response.
For element-wise systems, \cite{Zheng20261} established RA channel models and boresight optimization methods for wireless communications, \cite{Xiong2025} developed efficient channel estimation schemes for RA-enabled systems, and \cite{Wang20261} proposed a two-level RA architecture for sensing-aided secure multicast. These studies have shown the potential of antenna rotation for spatial channel reconfiguration.

Following these advances, RA techniques have recently been extended to ISAC scenarios. For example, \cite{Zhou2025} jointly optimized transmit beamforming and the array rotation angle to balance the communication sum rate and the sensing CRB. \cite{Wang2026} incorporated array rotation into sub-connected hybrid beamforming for far-field ISAC under limited RF chains. \cite{Zhang20262} further investigated RA-aided near-field ISAC and analyzed the corresponding range and angle estimation CRBs.
However, existing RA-enabled ISAC designs mainly rotate the entire array as a whole, where all antenna elements share a common orientation configuration. This array-wise model leaves element-wise orientation diversity unexplored. In the near field, different antenna elements generally observe a given spatial point along distinct propagation directions, making per-element boresight control fundamentally richer than array-wise rotation. Therefore, jointly exploiting element-wise RA boresight control for near-field ISAC remains to be addressed.

Based on the above discussion, this paper investigates an element-wise RA-enabled near-field ISAC system. To reduce hardware cost and implementation overhead, we consider a sub-connected hybrid beamforming architecture at the transmitter, where the transmit array is composed of independently rotatable antennas, while the receive array employs conventional FPAs for echo reception. Under this practical architecture, the element-wise boresight directions of the transmit RAs are jointly designed with the hybrid beamformers to improve the overall communication-sensing performance in cluttered near-field environments. Moreover, to reveal the impact of element-wise antenna rotation on target sensing accuracy, we derive the corresponding RCRB expression for target location estimation. 
To the best of the authors' knowledge, element-wise RA-enabled near-field ISAC with sub-connected hybrid beamforming has not yet been investigated. The main contributions of this article are summarized as follows.

\begin{itemize}
    \item We propose an element-wise RA-enabled near-field ISAC framework with sub-connected hybrid beamforming. 
    To better reflect practical sensing scenarios, clutter scatterers are explicitly modeled as sensing interference to the target echo.
    A near-field channel model is developed by combining spherical-wave propagation with orientation-dependent RA gains. Based on this model, we formulate a weighted communication-sensing utility maximization problem under the sub-connected hybrid beamforming structure.

    \item We develop an efficient alternating optimization (AO) algorithm to solve the formulated non-convex problem. Fractional programming (FP) is adopted to handle the coupled SINR and SCNR terms. The receive beamformer admits a closed-form solution, while the digital beamformer is obtained via the KKT conditions and bisection search. For the analog beamformer, the sub-connected structure is reformulated into a compact unit-modulus vector and optimized over the complex circle manifold. For the RA boresight directions, we propose a spherical-cap Frank--Wolfe-based update with a closed-form linear oracle, which directly accounts for the practical per-antenna rotation constraint.
    
    \item To further analyze the impact of RA rotation on target sensing, we derive a root Cram\'er--Rao bound (RCRB) expression for near-field target location estimation. By treating the unknown complex target response as a nuisance parameter and eliminating it through the Schur complement, the derived RCRB captures both near-field steering derivatives and orientation-dependent RA gain derivatives, thereby revealing how RA boresight control affects range and angle estimation.
    
    \item Numerical simulations are provided to evaluate the effectiveness of the proposed algorithm and the advantages of the sub-connected RA-enabled ISAC system. The element-wise RA-enabled hybrid scheme outperforms conventional FPA and fixed-RA schemes, and can approach or even surpass the fully-digital FPA benchmark in certain regimes. The RCRB and beampattern results further demonstrate that RA rotation improves off-broadside sensing accuracy, enhances range-domain focusing, and suppresses same-angle clutters in the near field.
\end{itemize}

Notation: $a/A$, $\mathbf{a}$, $\mathbf{A}$, and $\mathcal{A}$ denote a scalar, a vector, a matrix, and a set, respectively. $(\cdot)^{\mathrm T}$, $(\cdot)^{\mathrm H}$, $\odot$, $\|\cdot\|_2$, $|\cdot|$, $\|\cdot\|_F$, $\operatorname{Tr}\{\cdot\}$ denote the transpose, conjugate transpose, Hadamard product, Euclidean norm, absolute value, Frobenius matrix norm and trace operations, respectively. $j=\sqrt{-1}$ represents the imaginary unit. $\mathbb{C}^{M \times N}$ and $\mathbb{R}^{M \times N}$ are the sets for complex and real matrices of $M \times N$ dimensions, respectively. $\mathbf{I}_N$ is the identity matrix of order $N$. Finally, $[\cdot]_{(m,n)}$ denotes the $(m,n)$-th element of a matrix.
\section{System Model}
\subsection{RA-BS Model}
\begin{figure}[!t]
    \centering
    \includegraphics[height=0.40\textwidth]{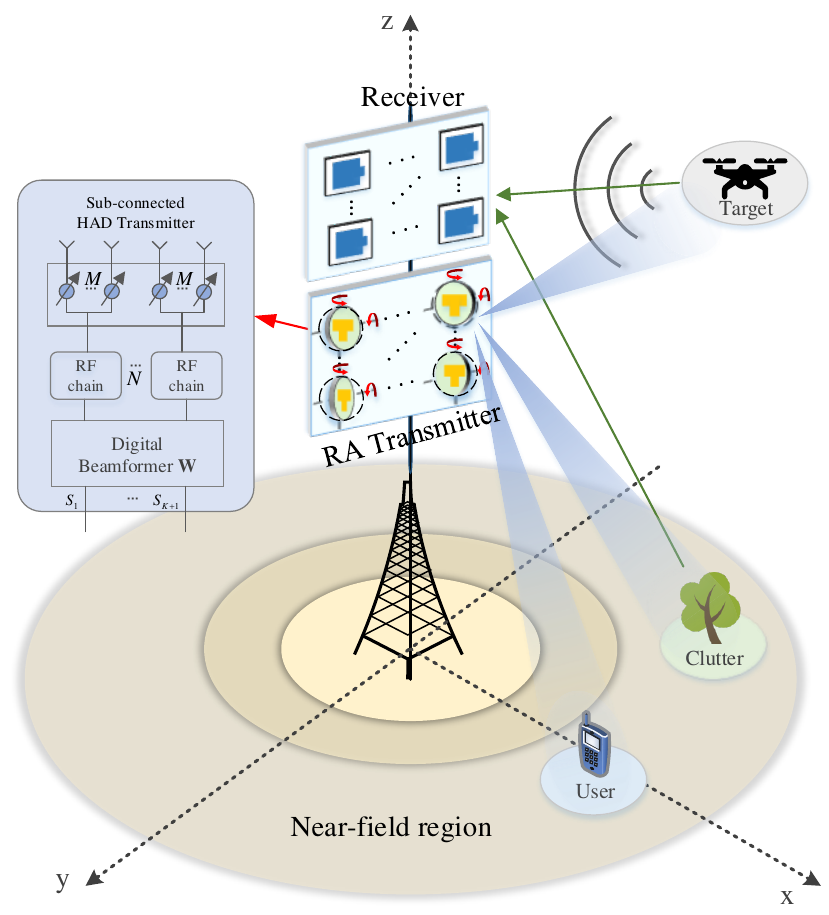}
    \captionsetup{font={small}}
    \caption{\justifying Illustration of the proposed RA-aided near-field ISAC system.}
    \label{scene0827}
\end{figure}
\begin{figure}[!t]
    
    \centering
    \begin{subfigure}{0.48\linewidth}
        \centering
        \includegraphics[width=\linewidth]{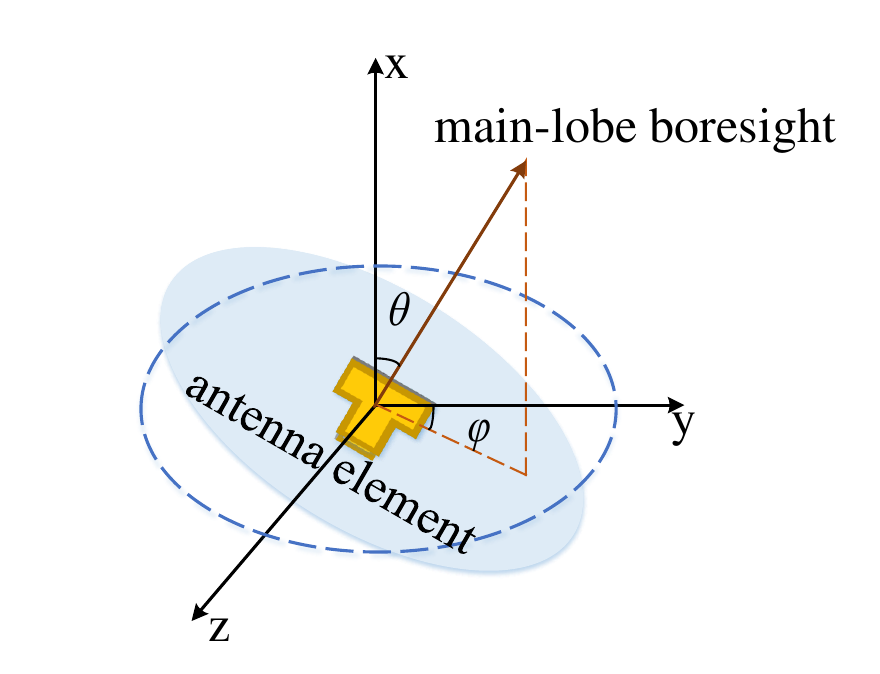}
        \caption{\justifying Rotational angles.}
        \label{b}
    \end{subfigure}
    \hfill
    \begin{subfigure}{0.48\linewidth}
        \centering
        \includegraphics[width=\linewidth]{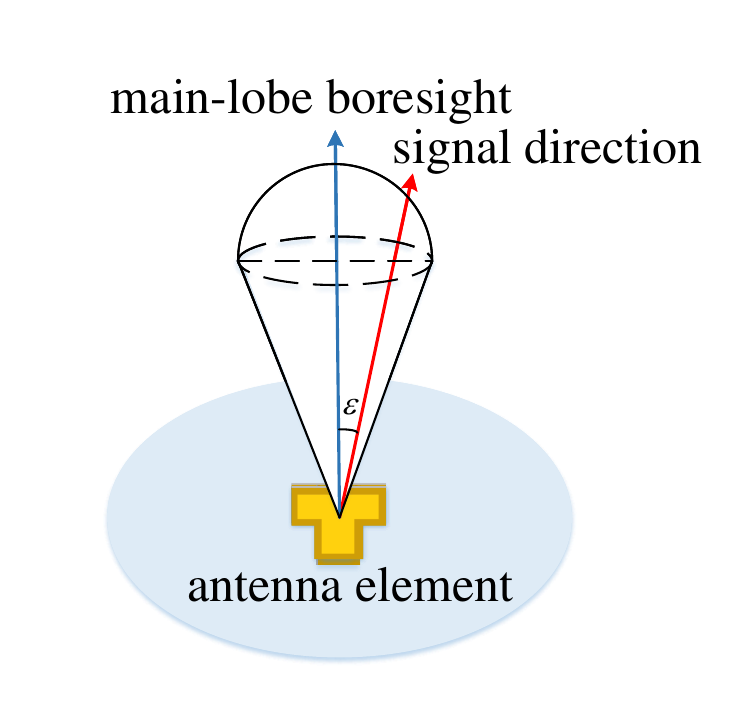}
        \caption{\justifying directional power pattern.}
        \label{a}
    \end{subfigure}
    \captionsetup{font={small}}
    \caption{\justifying Directional power pattern and rotation-angle of RA.}
    \label{fig:shiyitu}
\end{figure}

As shown in Fig.~\ref{scene0827}, we consider an RA-enabled near-field ISAC system. The system comprises a dual-functional radar-communication BS that simultaneously serves $K$ single-antenna users and senses one target in the presence of $C$ clutters, which act as sensing interference. The BS is equipped with a transmit UPA (TP) and a receive UPA (RP), both placed on the $y$-$z$ plane and oriented toward the positive $x$-axis. The TP employs RAs for directional transmission, while the RP uses FPAs for echo reception.
The TP and RP consist of \(N_t=N_y^{\rm t}N_z^{\rm t}\) and \(N_r\) antennas, respectively, with inter-element spacing \(\delta\). Taking the TP center as the origin of its local coordinate system, the position of the \(n\)-th transmit RA is denoted by
\begin{equation}
    \mathbf t_n^{\rm t}=
    \big[0,\; y_n^{\rm t},\; z_n^{\rm t}\big]^{\rm T},
    \quad n=1,\ldots,N_t.
\end{equation}

The TP employs a sub-connected hybrid beamforming architecture with $B$ RF chains. The transmit array is partitioned into $B$ non-overlapping RF-connected subarrays, each driven by one dedicated RF chain. Let $M=N_t/B$ denote the number of antennas connected to each RF chain, where $M$ is assumed to be an integer. The antenna index set connected to the $b$-th RF chain is given by
\begin{equation}
    \mathcal R_b
    =
    \{(b-1)M+1,\ldots,bM\},
    \quad b=1,\ldots,B .
\end{equation}

Each transmit RA can independently adjust its boresight direction. As depicted in Fig.~\eqref{b}, the boresight direction of the $n$-th RA is denoted by $\mathbf p_{n_t}$ and parameterized as
\begin{equation}
    \mathbf p_{n}
    =
    \left[
    \cos\theta_{n},\;
    \sin\theta_{n}\cos\varphi_{n},\;
    \sin\theta_{n}\sin\varphi_{n}
    \right]^{\rm T},
\end{equation}
where $\theta_{n}$ denotes the zenith angle with respect to the positive $x$-axis, and $\varphi_{n_t}$ denotes the azimuth angle of the projection of $\mathbf p_{n_t}$ onto the $y$-$z$ plane measured from the positive $y$-axis. The overall RA boresight configuration is collected as
\begin{equation}
    \mathbf D=
    \left[
    \mathbf p_1,\mathbf p_2,\ldots,\mathbf p_{N_t}
    \right]
    \in\mathbb R^{3\times N_t}.
\end{equation}
To account for practical rotation constraints \cite{Kumar2023}, the zenith angle of each RA is restricted by
\begin{equation}
    0\leq \theta_{n}\leq \theta_{\max},
    \quad n=1,\ldots,N_t,
\end{equation}
where $\theta_{\max}\in[0,\pi/2]$ is the maximum allowable zenith rotation angle.

\subsection{Channel Model}
We consider a near-field quasi-static channel model, where all objects are located in the near-field region of the BS. Different from the far-field plane-wave model, the spherical wavefront is characterized by the exact propagation distance between each antenna element and each spatial point. For a generic spatial point, its position is parameterized as
\begin{equation}
    \mathbf q(r,\vartheta,\varphi)
    =
    \left[
    r\sin\vartheta\cos\varphi,\;
    r\sin\vartheta\sin\varphi,\;
    r\cos\vartheta
    \right]^{\rm T},
\end{equation}
where $r$ denotes the distance from the TP center, $\vartheta\in[0,\pi]$ denotes the zenith angle with respect to the positive $z$-axis, and $\varphi\in[-\pi,\pi)$ denotes the azimuth angle on the $x$-$y$ plane measured from the positive $x$-axis.
The distance from the $n$-th RA to the spatial point $\mathbf q$ is given by
\begin{equation}
    r_{n}^{\rm t}(\mathbf q)
    =
    \left\|
    \mathbf q-\mathbf t_{n}^{\rm t}
    \right\|_2,
    \quad n=1,\ldots,N_t.
\end{equation}
The corresponding propagation direction is
\begin{equation}
    \mathbf u_{n}(\mathbf q)
    =
    \frac{\mathbf q-\mathbf t_{n}^{\rm t}}
    {r_{n}^{\rm t}(\mathbf q)}.
\end{equation}

The effective antenna gain of each RA depends on both its boresight orientation
and the signal propagation direction, as illustrated in Fig.~\eqref{a}. We adopt the following directional gain
pattern for each RA \cite{Balanis2015}:
\begin{equation} \label{G_origin}
G(\varepsilon  )=
\begin{cases}
G_0 \cos^{2p}\!(\varepsilon ), & \varepsilon \in\left[0,\frac{\pi}{2}\right],\\
0, & \text{otherwise},
\end{cases}
\end{equation}
where $G_0=2(2p+1)$ is the maximum boresight gain to satisfy the law of power conservation, and $p\geq 0$ is the directivity factor that characterizes the beamwidth of the antenna main lobe. Accordingly, the directional gain of the $n$-th RA toward $\mathbf q$ is
given by
\begin{equation}
    G_{n}(\mathbf q,\mathbf D)
    =
    G_0
    \left[
    \mathbf p_{n}^{\rm T}
    \mathbf u_{n}(\mathbf q)
    \right]_+^{2p},
    \label{G_nt_q}
\end{equation}
where $[x]_+\triangleq\max\{x,0\}$, and
$\mathbf p_{n}^{\rm T}\mathbf u_{n}(\mathbf q)
=\cos\psi_{n}(\mathbf q)$. 
For notational convenience, define the RA directional gain vector toward $\mathbf q$ as
\begin{equation}
    \mathbf g(\mathbf q,\mathbf D)
    =
    \left[
    \sqrt{G_1(\mathbf q,\mathbf D)},\ldots,
    \sqrt{G_{N_t}(\mathbf q,\mathbf D)}
    \right].
\end{equation}

\textit{1) Channel model for communication users:}
We adopt a general multipath channel for communication users, comprising one LoS path and $L_k$ NLoS paths originating from environmental scatterers \cite{Lu2023}. Accordingly, the near-field channel between the TP and user $k$ is given by
\begin{equation}
\begin{aligned}
    \mathbf h_k^{\rm H}
    =
    &\beta_{{\rm L},k}
    \mathbf g(\mathbf q_k,\mathbf D)
    \odot
    \mathbf b^{\rm H}(\mathbf q_k) \\
    &+
    \sum_{\ell=1}^{L_k}
    \beta_{{\rm NL},k,\ell}
    \mathbf g(\mathbf q_{k,\ell},\mathbf D)
    \odot
    \mathbf b^{\rm H}(\mathbf q_{k,\ell}),
\end{aligned}
\end{equation}
where $\beta_{{\rm L},k} = \frac{\lambda}{4\pi r_k}e^{-j\frac{2\pi r_k}{\lambda}}$ and $\beta_{{\rm NL},k,\ell}$ are the complex-valued channel gains of the LoS path and the $\ell$-th NLoS path between the TP and user $k$, respectively. Here, $\mathbf q_k=\mathbf q(r_k,\vartheta_k,\varphi_k)$ denotes the position of user $k$, while $\mathbf q_{k,\ell}=\mathbf q(r_{k,\ell},\vartheta_{k,\ell},\varphi_{k,\ell})$ denotes the position of the effective scatterer associated with the $\ell$-th NLoS path.
The near-field channel steering vector $\mathbf b(\mathbf q)$ is given by
\begin{equation}
    \mathbf b^H(\mathbf q)
    =
    \left[
    e^{-j\frac{2\pi}{\lambda}\left(r_1^{\rm t}(\mathbf q)-r\right)},
    \ldots,
    e^{-j\frac{2\pi}{\lambda}\left(r_{N_t}^{\rm t}(\mathbf q)-r\right)}
    \right],
\end{equation}
where $r$ is the distance from the TP center to $\mathbf q$.

\textit{2) Sensing model for target:}
For target sensing, we only consider the LoS path. In the monostatic sensing mode, target sensing relies on the echo signal reflected by the target and received at the RP. The round-trip channel $\mathbf H_s\in\mathbb C^{N_r\times N_t}$ accounts for the propagation path from the TP to the target and from the target back to the RP, which is modeled as
\begin{equation} \label{target echo}
    \mathbf H_s
    =
    \beta_s
    \mathbf b_r(\mathbf q_s)
    \left(
    \mathbf g(\mathbf q_s,\mathbf D)
    \odot
    \mathbf b^{\rm H}(\mathbf q_s)
    \right),
\end{equation}
where $\beta_s=\sqrt{\frac{\lambda^2 \alpha_s}{64\pi^3 r_s^4}}\exp(-j\frac{4\pi r_s}{\lambda})$ is the round-trip complex-valued path coefficient, $\alpha_s$ is the radar cross section (RCS) of target, and $\mathbf q_s=\mathbf q(r_s,\vartheta_s,\varphi_s)$ denotes the target position. Here, $\mathbf b_r(\mathbf q_s)$ is the near-field receive steering vector of the RP toward the target.
The receive steering vector of the RP toward target $\mathbf q_s$ is given by
\begin{equation}
    \mathbf b_r(\mathbf q_s)
    =
    \left[
    e^{-j\frac{2\pi}{\lambda}\left(r_1^{\rm r}(\mathbf q_s)-r_s\right)},
    \ldots,
    e^{-j\frac{2\pi}{\lambda}\left(r_{N_r}^{\rm r}(\mathbf q_s)-r_s\right)}
    \right]^{\rm T},
\end{equation}
where $r_{n_r}^{\rm r}(\mathbf q_s)$ denotes the distance between the $n_r$-th receive antenna and the target $\mathbf q_s$.

\subsection{Signal Model}
Let $\mathbf s \in \mathbb C^{(K+1) \times 1}$ be the independent and identically distributed (i.i.d.) signal vector with $\mathbb E(\mathbf {ss}^\mathrm H) = \mathbf I_{\mathrm K+1}$, where $s_1,\dots,s_K$ are intended for $K$ users, respectively, and $s_{K+1}$ is dedicated for sensing. Consider that the transmitted ISAC waveform is known at the BS, and thus the total $K+1$ data streams can be utilized for sensing. Assume that the BS perfectly knows the channel state information (CSI), where the CSI can be obtained by channel estimation method for RA-aided systems in \cite{Xiong20252}. The digital beamforming matrix is
\begin{equation}
    \mathbf{W} = [ \underbrace{\mathbf{w}_1, \dots, \mathbf{w}_K}_{{\text{for communication and sensing}}}, \underbrace{\mathbf{w}_{K+1}}_{{\text{dedicated for sensing}}}] \in \mathbb{C}^{B \times (K+1)},
\end{equation}
where $w_k \in \mathbb C^{B \times 1}$ denotes the digital beamforming vector of the $k$-th stream.
Under the sub-connected architecture, the analog precoding matrix $\mathbf F$ is expressed as
\begin{equation}
\begin{aligned}
    \mathbf F=\mathrm {Bdiag}(\mathbf f_1,\mathbf f_2,\ldots,\mathbf f_B)\in \mathbb C^{N_t \times B},
\end{aligned}
\end{equation}
where $\mathbf f_b \in \mathbb C^{M \times 1},b=1,2,\dots,B$, is the analog beamforming vector of the $b$-th transmit subarray. Each element of $\mathbf f_b$ satisfies the constant-modulus constraint, i.e, $|\mathbf f_b(j)|=1,j=1,\dots,M$.

The received signal at the $k$-th user is given by
\begin{equation}
\begin{aligned}
    y_k &= \mathbf h^H_k \mathbf {FWs}+n_k \\
    &=\underbrace{\mathbf h^H_k \mathbf F \mathbf w_k s_k}_{\text{desired signal}} + \underbrace{\sum^{K+1}_{j=1,j\neq k}\mathbf h^H_k \mathbf F \mathbf w_j s_j}_{\text{interference}} +n_k,
\end{aligned}
\end{equation}
where $n_k \sim \mathcal{C} \mathcal{N} (0,\sigma_k^2)$ is zero-mean additive white Gaussian noise (AWGN) with noise power $\sigma_k^2$.
Therefore, the SINR of user $k$ is
\begin{equation}
\begin{aligned}
    \text{SINR}_k=\frac{|\mathbf h^H_k \mathbf {Fw}_k|^2}{\sum^{K+1}_{j=1,j\neq k}|\mathbf h^H_k \mathbf {Fw}_j|^2+\sigma_k^2}.
\end{aligned}
\end{equation}
The achievable rate of user $k$ can be obtained as
\begin{equation}\label{Rk}
\begin{aligned}
    R_k=\mathrm {log}_2(1+ \rm{SINR}_k).
\end{aligned}
\end{equation}

For sensing, the RP applies a receive beamformer $\mathbf u \in \mathbb C^{N_r \times 1}$ with $\|\mathbf u\|^2_2=1$ to capture the reflected signal echo. The received signal at the BS can be expressed as
\begin{equation}
    y_s= \underbrace{\mathbf u^\mathrm H \mathbf H_s \mathbf{F W s}}_{\text {target reflection}}+\underbrace{\sum^C_{c=1} \mathbf u^\mathrm H \mathbf H_c \mathbf{F W s}}_{\text {clutter reflections}}+n_s,
\end{equation}
where $n_s \sim \mathcal{C} \mathcal{N} (0,\sigma_s^2)$ denotes the AWGN for radar link. Here, $\mathbf H_s$ and $\mathbf H_c$ denote the round-trip channel matrices associated with the target and the $c$-th clutter, respectively.

We assume the target and clutters are modeled as point-like scatterers, the sensing signal-to-clutter-plus-noise-ratio (SCNR) at the RP is given by
\begin{equation}\label{SCNR}
    \text{SCNR}
    =
    \frac{
    \left\|\mathbf u^\mathrm H \mathbf H_s \mathbf{F W}\right\|_2^2
    }{
    \sum_{c=1}^{C}
    \left\|\mathbf u^\mathrm H \mathbf H_c \mathbf{F W}\right\|_2^2
    +
    \sigma_s^2
    }.
\end{equation}

For point target detection in MIMO radar systems, the SCNR characterizes the target echo strength relative to clutter and noise after receive combining. To further quantify how much information about the target response can be extracted from the received echo under a given transmit waveform, we adopt the sensing mutual information (MI) per unit time as the sensing performance metric~\cite{Ouyang2023,Peng2024}. Specifically, conditioned on the known transmitted signal $\mathbf x$ and the receive beamformer $\mathbf u$, the sensing rate is defined as~\cite{Lyu2025}
\begin{equation}\label{Rs}
    R_s
    =
    I(y_s;\mathbf H_s|\mathbf u,\tilde {\mathbf x})
    =
    \log_2(1+\mathrm{SCNR}),
\end{equation}
where $\tilde {\mathbf x}=\mathbf F\mathbf W\mathbf s$ denotes the transmitted ISAC signal.
\subsection{Problem Formulation}
We aim to maximize a weighted communication-sensing utility by jointly optimizing the RA boresight directions, hybrid transmit beamformer, and receive sensing beamformer. Accordingly, the optimization problem can be formulated as
\begin{subequations}\label{P_1}
\begin{align}
    \max_{\mathbf u,\mathbf F,\mathbf W,\mathbf D}\quad
    & \mathcal G(\mathbf u,\mathbf F,\mathbf W,\mathbf D)
    =
    \varpi_c\sum_{k=1}^{K}\alpha_kR_k+\varpi_s R_s
    \label{G}\\
    \mathrm{s.t.}\quad
    & \|\mathbf u\|_2^2=1, \label{u_constraint}\\
    & \|\mathbf F\mathbf W\|_{\rm F}^2\leq P, \label{P_constraint}\\
    & \mathbf F\in\mathcal A_F, \label{F_constraint}\\
    & \mathbf D\in\mathcal D. \label{D_constraint}
\end{align}
\end{subequations}
where $\varpi_c \geq 0$ and $\varpi_s\geq 0$ are the communication and sensing weights satisfying $\varpi_c+\varpi_s=1$, and $\alpha_k\geq0$ is the user-level rate weight with $\sum_{k=1}^{K}\alpha_k=1$. In this paper, we adopt $\alpha_k=1/K$, and thus $\sum_{k=1}^{K}\alpha_kR_k$ becomes the average communication rate. Constraint \eqref{u_constraint} normalizes the receive beamformer. Constraint \eqref{P_constraint} limits the total transmit power. Constraint \eqref{F_constraint} corresponds to the sub-connected analog precoding structure, where $\mathcal A_F$ denotes the feasible set of block-diagonal analog precoders with constant-modulus elements. Constraint \eqref{D_constraint} represents the feasible set of RA boresight directions satisfying the maximum zenith-angle constraint.

It is worth mentioning that we focus on a single target in this paper for simplicity. Since the proposed formulation can treat the echoes from other targets as radar interference in the denominator of \eqref{SCNR}, it can be similarly adapted to address the multi-target problem.

\section{Proposed Algorithm}
Problem \eqref{P_1} is difficult to solve due to its non-convex objective, coupled variables, and practical hardware constraints. Specifically, the fractional SINR/SCNR terms couple $\mathbf W$, $\mathbf F$, $\mathbf u$, and $\mathbf D$, while the constant-modulus and RA zenith-angle constraints further complicate the optimization.

In this section, we propose an AO algorithm to tackle these challenges. Specifically, problem \eqref{P_1} is decomposed into several tractable subproblems, where the digital beamformer, analog precoder, receive beamformer, and RA boresight directions are updated alternately while keeping the other variables fixed. Unlike conventional two-stage designs based on fully digital beamformer approximation, the proposed algorithm directly optimizes the hybrid beamforming variables under the practical sub-connected architecture.
For the variables $\mathbf F$, $\mathbf W$, $\mathbf D$, the FP approach is employed\cite{ShenK2018}. We introduce auxiliary variables $\boldsymbol {\mu} =[ \mu_1,\dots,\mu_{K+1}]$, $\boldsymbol {\xi}^c =[ \xi^c_1,\dots, \xi^c_K]^T$ and $\boldsymbol {\xi}^s =[ \xi^s_1,\dots, \xi^s_{K+1}]^T$  to transform \eqref{G} into an equivalent tractable form \eqref{FP}. Based on the AO framework, the problem \eqref{P_1} can be decomposed as five sub-problems. The details of the proposed algorithms are presented below.

\begin{figure*}[b] 
    \centering
    \hrulefill
    \begin{align}
        \tilde{\mathcal{G}} 
        &= \varpi_c\frac{1}{K} \sum_{k=1}^K \text{log}(1 + \mu_k) + \varpi_s \text{log}(1 + \mu_{K+1}) - \varpi_c\frac{1}{K} \sum_{k=1}^K \mu_k - \varpi_s \mu_{K+1} \notag \\
        & + \varpi_c\frac{1}{K} \sum_{k=1}^K \Bigl( 2\sqrt{1 + \mu_k} \mathrm{Re}\bigl\{ \xi_k^c \mathbf{h}_k^H(\mathbf D) \mathbf F\mathbf{w}_k \bigr\} 
        - \lvert \xi_k^c \rvert^2 \Bigl( \sum_{j=1}^{K+1} \lvert \mathbf{h}_k^H(\mathbf D) \mathbf F\mathbf{w}_j \rvert^2 + \sigma_k^2 \Bigr) \Bigr) \notag \\
        & + \varpi_s \Bigl( 2\sqrt{1 + \mu_{K+1}} \mathrm{Re}\bigl\{ \mathbf u^H \mathbf{H}_s(\mathbf D) \mathbf{F} \mathbf W \boldsymbol{\xi}^s \bigr\} 
        - \lVert \boldsymbol{\xi}^s \rVert^2 \Bigl( \sum_{c=1}^C \lVert  \mathbf u^H \mathbf{H}_c(\mathbf D) \mathbf{F} \mathbf{W}\rVert^2 
        + \lVert \mathbf u^H \mathbf{H}_s(\mathbf D) \mathbf{F} \mathbf{W}\rVert^2 + \sigma_s^2 \Bigr) \Bigr)  
        \label{FP}
    \end{align}
\end{figure*}
\subsection{Receive Beamforming Optimization}
Given $\{\mathbf F,\mathbf W,\mathbf D\}$, the optimization of $\mathbf u$
only affects the SCNR. Therefore, maximizing the objective value in \eqref{G} with respect to
$\mathbf u$ is equivalent to maximizing the SCNR. Hence, the subproblem for
$\mathbf u$ is formulated as
\begin{subequations}\label{u_optimize}
    \begin{align}
    \max_{\mathbf u}\quad & \mathrm{SCNR} \label{P_u}\\
    \mathrm{s.t.}\quad & \eqref{u_constraint}.
    \end{align}
\end{subequations}

Based on the SCNR expression \eqref{SCNR}, problem \eqref{u_optimize} can be regarded as
a generalized Rayleigh quotient maximization problem. Since the target is
modeled as a point-like scatterer, the target response covariance is rank-one.
By applying Proposition~\ref{Proposition:1}, the optimal closed-form solution
can be obtained.

\begin{proposition}\label{Proposition:1}
For given $\{\mathbf F,\mathbf W,\mathbf D\}$, the optimal solution to
problem \eqref{u_optimize} is given by
\begin{equation}
    \mathbf u^\star
    =
    \frac{
    \left(
    \sum_{c=1}^{C}
    \tilde{\mathbf H}_c\tilde{\mathbf H}_c^{\rm H}
    +
    \sigma_s^2\mathbf I_{N_r}
    \right)^{-1}
    \mathbf b_r(\mathbf q_s)
    }{
    \left\|
    \left(
    \sum_{c=1}^{C}
    \tilde{\mathbf H}_c\tilde{\mathbf H}_c^{\rm H}
    +
    \sigma_s^2\mathbf I_{N_r}
    \right)^{-1}
    \mathbf b_r(\mathbf q_s)
    \right\|_2
    },
    \label{eq:u_optimal}
\end{equation}
where $\tilde{\mathbf H}_c = \mathbf H_c\mathbf F\mathbf W
\in \mathbb C^{N_r\times (K+1)}$ denotes the $c$-th clutter response, and $\mathbf b_r(\mathbf q_s)$ is the receive steering vector
toward the target.
\end{proposition}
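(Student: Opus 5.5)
The plan is to write both numerator and denominator of the SCNR in \eqref{SCNR} as Hermitian quadratic forms in $\mathbf u$, and then solve the resulting generalized Rayleigh quotient through a whitening argument and the Cauchy--Schwarz inequality.

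\textbf{Step 1: quadratic forms.} By \eqref{target echo} the target term factorizes as $\mathbf H_s\mathbf F\mathbf W=\mathbf b_r(\mathbf q_s)\mathbf a^{\rm T}$, where
\begin{equation*}
\mathbf a^{\rm T}=\beta_s\left(\mathbf g(\mathbf q_s,\mathbf D)\odot\mathbf b^{\rm H}(\mathbf q_s)\right)\mathbf F\mathbf W\in\mathbb C^{1\times(K+1)}
\end{equation*}
is independent of $\mathbf u$. Hence
\begin{equation*}
\left\|\mathbf u^{\rm H}\mathbf H_s\mathbf F\mathbf W\right\|_2^2=\|\mathbf a\|_2^2\,\left|\mathbf u^{\rm H}\mathbf b_r(\mathbf q_s)\right|^2 ,
\end{equation*}
which is a rank-one quadratic form in $\mathbf u$. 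The clutter terms satisfy $\|\mathbf u^{\rm H}\tilde{\mathbf H}_c\|_2^2=\mathbf u^{\rm H}\tilde{\mathbf H}_c\tilde{\mathbf H}_c^{\rm H}\mathbf u$. Since $\|\mathbf u\|_2=1$, the noise term can be written as $\sigma_s^2=\sigma_s^2\mathbf u^{\rm H}\mathbf u$. The denominator therefore equals $\mathbf u^{\rm H}\mathbf R\mathbf u$, with
\begin{equation*}
\mathbf R=\sum_{c=1}^C\tilde{\mathbf H}_c\tilde{\mathbf H}_c^{\rm H}+\sigma_s^2\mathbf I_{N_r}\succ\mathbf 0 .
\end{equation*}

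\textbf{Step 2: scale invariance.} After this rewriting, the SCNR is invariant to any nonzero complex scaling of $\mathbf u$. The norm constraint \eqref{u_constraint} can therefore be dropped and reimposed at the end by normalization, which gives the denominator in \eqref{eq:u_optimal}.

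\textbf{Step 3: maximization.} Substitute $\mathbf v=\mathbf R^{1/2}\mathbf u$, which is valid because $\mathbf R$ is positive definite. The SCNR becomes
\begin{equation*}
\|\mathbf a\|_2^2\,\frac{\left|\mathbf v^{\rm H}\mathbf R^{-1/2}\mathbf b_r(\mathbf q_s)\right|^2}{\|\mathbf v\|_2^2}.
\end{equation*}
By Cauchy--Schwarz this is at most $\|\mathbf a\|_2^2\,\mathbf b_r^{\rm H}(\mathbf q_s)\mathbf R^{-1}\mathbf b_r(\mathbf q_s)$. Equality holds if and only if $\mathbf v\propto\mathbf R^{-1/2}\mathbf b_r(\mathbf q_s)$, that is, $\mathbf u\propto\mathbf R^{-1}\mathbf b_r(\mathbf q_s)$. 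Normalizing this vector yields $\mathbf u^\star$ in \eqref{eq:u_optimal}.

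An equivalent route is to note that the generalized eigenproblem for the pencil $(\mathbf b_r\mathbf b_r^{\rm H},\mathbf R)$ has a rank-one left-hand matrix. Its single nonzero eigenvalue then has eigenvector $\mathbf R^{-1}\mathbf b_r$.

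\textbf{Main obstacle.} The only delicate point is Step 1: one must justify that $\mathbf H_s\mathbf F\mathbf W$ is rank one with column space spanned by $\mathbf b_r(\mathbf q_s)$, which is exactly the point-target model \eqref{target echo}. One should also note the degenerate case $\mathbf a=\mathbf 0$. In that case the SCNR is identically zero, so every feasible $\mathbf u$, including \eqref{eq:u_optimal}, is optimal.
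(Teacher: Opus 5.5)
Your proposal is correct and follows the paper's own proof. The paper likewise uses the point-target model \eqref{target echo} to factor the numerator as $|\beta_s|^2\|\mathbf a_s^{\rm H}\mathbf F\mathbf W\|_2^2\,|\mathbf u^{\rm H}\mathbf b_r(\mathbf q_s)|^2$, writes the denominator as $\mathbf u^{\rm H}\mathbf R_{\rm i}\mathbf u$ with $\mathbf R_{\rm i}\succ\mathbf 0$, and invokes the generalized Rayleigh quotient to obtain $\mathbf u^\star\propto\mathbf R_{\rm i}^{-1}\mathbf b_r(\mathbf q_s)$; your whitening/Cauchy--Schwarz step, the explicit use of $\|\mathbf u\|_2=1$ to homogenize the noise term, and the degenerate case $\mathbf a=\mathbf 0$ just make explicit details the paper leaves implicit.
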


\begin{proof}
See Appendix A.
\end{proof}
    
\subsection{Transmit Digital Beamforming Optimization}
Due to the sub-connected structure and the constant-modulus
constraint of $\mathbf F$, we have $\mathbf F^{\rm H}\mathbf F=M\mathbf I_B$. Therefore, the constraint \eqref{P_constraint} can be rewritten as
\begin{equation}
    \|\mathbf F\mathbf W\|_{\rm F}^2
    =
    \mathrm{Tr}(\mathbf W^{\rm H}\mathbf F^{\rm H}\mathbf F\mathbf W)
    =
    M\|\mathbf W\|_{\rm F}^2 .
\end{equation}
Thus, the digital beamforming subproblem is formulated as
\begin{subequations}\label{P_W}
\begin{align}
    \max_{\mathbf W}\quad
    & \tilde{\mathcal G}(\mathbf W)\\
    \mathrm{s.t.}\quad
    & \|\mathbf W\|_{\rm F}^2\leq \frac{P}{M}.
\end{align}
\end{subequations}

For fixed $\{\mathbf u,\mathbf F,\mathbf D,\boldsymbol\mu,
\boldsymbol\xi^s,\boldsymbol\xi^c\}$, the objective function
$\tilde{\mathcal G}(\mathbf W)$ is a concave quadratic function with respect
to $\mathbf W$. Specifically, the terms related to $\mathbf W$ can be written as
\begin{equation}
    \tilde{\mathcal G}(\mathbf W)
    =\sum_{k=1}^{K+1}
    \left(
    2\mathrm{Re}\{\boldsymbol{\varphi}_k^{\rm H}\mathbf w_k\}
    -
    \mathbf w_k^{\rm H}\boldsymbol{\Lambda}\mathbf w_k
    \right)+{\rm const},
\end{equation}
where
\begin{align}
\boldsymbol{\Lambda}
&=
\varpi_c\frac{1}{K}
\sum_{i=1}^{K}
|\xi_i^c|^2
(\mathbf h_i^{\rm H}\mathbf F)^{\rm H}
(\mathbf h_i^{\rm H}\mathbf F)
\notag\\
&\quad+
\varpi_s
\|\boldsymbol{\xi}^s\|_2^2
\bigg[
(\mathbf u^{\rm H}\mathbf H_s\mathbf F)^{\rm H}
(\mathbf u^{\rm H}\mathbf H_s\mathbf F)
\notag\\
&\qquad+
\sum_{c=1}^{C}
(\mathbf u^{\rm H}\mathbf H_c\mathbf F)^{\rm H}
(\mathbf u^{\rm H}\mathbf H_c\mathbf F)
\bigg],
\end{align}
and
\begin{align}
\boldsymbol{\varphi}_k
&=
\varpi_c\frac{1}{K}\sqrt{1+\mu_k}
(\xi_k^c\mathbf h_k^{\rm H}\mathbf F)^{\rm H}
\notag\\
&+
\varpi_s\sqrt{1+\mu_{K+1}}
(\xi_k^s\mathbf u^{\rm H}\mathbf H_s\mathbf F)^{\rm H},
 k=1,\ldots,K,
\\
\boldsymbol{\varphi}_{K+1}
&=
\varpi_s\sqrt{1+\mu_{K+1}}
(\xi_{K+1}^s\mathbf u^{\rm H}\mathbf H_s\mathbf F)^{\rm H}.
\end{align}

Since $\tilde{\mathcal G}(\mathbf W)$ is concave in $\mathbf W$ and the feasible set is convex, problem \eqref{P_W} can be globally solved via the KKT conditions. To handle the transmit power constraint, we introduce a
nonnegative Lagrange multiplier $\lambda$ and construct the following
Lagrangian:
\begin{equation}
    \mathcal L(\mathbf W,\lambda)
    =
    -\tilde{\mathcal G}(\mathbf W)
    +
    \lambda
    \left(
    \|\mathbf W\|_{\rm F}^2-\frac{P}{M}
    \right),
\end{equation}
where $\lambda\geq0$ is the Lagrange multiplier corresponding to the power constraint.
By setting the Wirtinger derivative of $\mathcal L(\mathbf W,\lambda)$ with
respect to $\mathbf w_k^\ast$ to zero, the closed-form expression of $\mathbf W$ can be obtained, which is
\begin{equation}
    \mathbf w_k(\lambda)
    =
    \left(
    \boldsymbol{\Lambda}
    +
    \lambda\mathbf I_B
    \right)^{-1}
    \boldsymbol{\varphi}_k,
    \quad k=1,\ldots,K+1.
    \label{eq:w_lambda}
\end{equation}
If $\|\mathbf W(0)\|_{\rm F}^2\leq P/M$, then $\lambda^\star=0$. Otherwise, $\lambda$ needs to be decided to satisfy
\begin{equation}
    h(\lambda)
    =
    \|\mathbf W(\lambda)\|_{\rm F}^2
    -
    \frac{P}{M}
    \leq \varepsilon .
\end{equation}
Since $\boldsymbol{\Lambda}\succeq \mathbf 0$, increasing $\lambda$ enhances
the regularization of
$(\boldsymbol{\Lambda}+\lambda\mathbf I_B)^{-1}$ and thus monotonically
decreases $\|\mathbf W(\lambda)\|_{\rm F}^2$. Therefore, $h(\lambda)$ is
monotonically decreasing with respect to $\lambda$, and $\lambda^\star$ can be
found via the bisection method \cite{PanC2020}.

\subsection{Transmit Analog Beamforming Optimization}
Given $\{\mathbf u,\mathbf W,\mathbf D,\boldsymbol\mu,
\boldsymbol\xi^s,\boldsymbol\xi^c\}$, we optimize the analog beamformer
$\mathbf F$. The corresponding subproblem is
\begin{subequations}\label{P_F}
\begin{align}
    \max_{\mathbf F}\quad
    & \tilde{\mathcal G}(\mathbf F)\\
    \mathrm{s.t.}\quad
    & \mathbf F\in\mathcal A_F .
\end{align}
\end{subequations}

The difficulty of \eqref{P_F} lies in the block-diagonal unit-modulus
structure of $\mathbf F$. Since only its nonzero entries are optimizable, we
collect them into a compact vector
\begin{equation}
    \mathbf z
    =
    [\mathbf f_1^{\rm T},\ldots,\mathbf f_B^{\rm T}]^{\rm T}
    \in\mathbb C^{N_t\times 1}.
\end{equation}
Then, for any digital beamformer $\mathbf w_j$, we have
\begin{equation}
    \mathbf F\mathbf w_j
    =
    \left(
    \mathrm{diag}(\mathbf w_j)\otimes\mathbf I_M
    \right)
    \mathbf z .
    \label{Fw_z}
\end{equation}
For the sensing-related term $\mathbf F\mathbf W$, we further exploit the SVD
of $\mathbf W$ as
\begin{equation}
    \mathbf W
    =
    \sum_{r=1}^{R}
    \rho_r\mathbf u_r\mathbf v_r^{\rm H},
\end{equation}
where $R=\mathrm{rank}(\mathbf W)$, $\rho_r$ is the $r$-th singular value, and
$\mathbf u_r$ and $\mathbf v_r$ are the corresponding left and right singular
vectors. By defining
$\tilde{\mathbf u}_r=\sqrt{\rho_r}\mathbf u_r$ and
$\tilde{\mathbf v}_r=\sqrt{\rho_r}\mathbf v_r$, we obtain
\begin{equation}
    \mathbf F\mathbf W
    =
    \sum_{r=1}^{R}
    \left(
    \mathrm{diag}(\tilde{\mathbf u}_r)\otimes\mathbf I_M
    \right)
    \mathbf z
    \tilde{\mathbf v}_r^{\rm H}.
    \label{FW_svd_z}
\end{equation}
This SVD-based compact vector reformulation preserves the sub-connected
hardware structure, removes the structurally zero entries, and makes the
objective explicitly differentiable with respect to $\mathbf z$.

Based on \eqref{Fw_z} and \eqref{FW_svd_z}, problem \eqref{P_F} can be
equivalently recast as
\begin{subequations}\label{F_optimize}
\begin{align}
    \max_{\mathbf z}\quad
    &
    \tilde{\mathcal P}(\mathbf z)
    =
    2\mathrm{Re}\{\tilde{\boldsymbol\beta}^{\rm H}\mathbf z\}
    -
    \mathbf z^{\rm H}\tilde{\boldsymbol\Xi}\mathbf z
    +
    {\rm const}
    \label{eq:F_optimize}\\
    \mathrm{s.t.}\quad
    &
    |[\mathbf z]_n|=1,\quad n=1,\ldots,N_t .
    \label{z_constraint}
\end{align}
\end{subequations}
Here, $\tilde{\boldsymbol\beta}$ and $\tilde{\boldsymbol\Xi}$ collect the
linear and quadratic coefficients with respect to $\mathbf z$, respectively.
Specifically,
\begin{equation}
    \tilde{\boldsymbol\beta}
    =
    \varpi_c\frac{1}{K}\sum_{k=1}^{K}\tilde{\boldsymbol\beta}_k
    +
    \varpi_s\tilde{\boldsymbol\beta}_s,
\end{equation}
where
\begin{equation}
    \tilde{\boldsymbol\beta}_k^{\rm H}
    =
    \sqrt{1+\mu_k}\xi_k^c
    \mathbf h_k^{\rm H}
    \left(
    \mathrm{diag}(\mathbf w_k)\otimes\mathbf I_M
    \right),
\end{equation}
and
\begin{equation}
    \tilde{\boldsymbol\beta}_s^{\rm H}
    =
    \sqrt{1+\mu_{K+1}}
    \sum_{r=1}^{R}
    \left(
    \tilde{\mathbf v}_r^{\rm H}\boldsymbol\xi^s
    \right)
    \mathbf u^{\rm H}\mathbf H_s
    \left(
    \mathrm{diag}(\tilde{\mathbf u}_r)\otimes\mathbf I_M
    \right).
\end{equation}
The quadratic coefficient matrix is given by
\begin{align}
    \tilde{\boldsymbol\Xi}
    &=
    \varpi_c\frac{1}{K}
    \sum_{k=1}^{K}
    |\xi_k^c|^2
    \sum_{j=1}^{K+1}
    \tilde{\mathbf h}_{k,j}
    \tilde{\mathbf h}_{k,j}^{\rm H}
    \notag\\
    &\quad+
    \varpi_s
    \|\boldsymbol\xi^s\|_2^2
    \left(
    \sum_{c=1}^{C}
    \sum_{r=1}^{R}
    \tilde V_r
    \tilde{\mathbf g}_{c,r}
    \tilde{\mathbf g}_{c,r}^{\rm H}
    +
    \sum_{r=1}^{R}
    \tilde V_r
    \tilde{\mathbf g}_{s,r}
    \tilde{\mathbf g}_{s,r}^{\rm H}
    \right),
\end{align}
with
\begin{equation}
    \tilde{\mathbf h}_{k,j}^{\rm H}
    =
    \mathbf h_k^{\rm H}
    \left(
    \mathrm{diag}(\mathbf w_j)\otimes\mathbf I_M
    \right),
\end{equation}
\begin{equation}
    \tilde{\mathbf g}_{c,r}^{\rm H}
    =
    \mathbf u^{\rm H}\mathbf H_c
    \left(
    \mathrm{diag}(\tilde{\mathbf u}_r)\otimes\mathbf I_M
    \right),
\end{equation}
\begin{equation}
    \tilde{\mathbf g}_{s,r}^{\rm H}
    =
    \mathbf u^{\rm H}\mathbf H_s
    \left(
    \mathrm{diag}(\tilde{\mathbf u}_r)\otimes\mathbf I_M
    \right),
\end{equation}
and $\tilde V_r=\|\tilde{\mathbf v}_r\|_2^2=\rho_r$.

The reformulated problem \eqref{F_optimize} has a compact unit-modulus vector
structure and can be efficiently solved via manifold optimization. Accordingly,
its feasible set is characterized by the complex circle manifold
\begin{equation}
    \mathcal M
    =
    \left\{
    \mathbf z\in\mathbb C^{N_t\times 1}
    \mid
    |[\mathbf z]_n|=1,\; n=1,\ldots,N_t
    \right\}.
\end{equation}
The tangent space of $\mathcal M$ at $\mathbf z$ is given by
\begin{equation}
    T_{\mathbf z}\mathcal M
    =
    \left\{
    \boldsymbol\zeta\in\mathbb C^{N_t\times 1}
    \mid
    \mathrm{Re}\{
    \boldsymbol\zeta\odot \mathbf z^*
    \}
    =
    \mathbf 0
    \right\},
\end{equation}
where $\boldsymbol\zeta$ denotes a tangent vector at $\mathbf z$.

The Euclidean gradient of the objective function $\tilde{\mathcal P}(\mathbf z)$
with respect to the complex vector $\mathbf z$ is given by
\begin{equation}
    \nabla_{\mathbf z}\tilde{\mathcal P}(\mathbf z)
    =
    \tilde{\boldsymbol\beta}
    -
    \tilde{\boldsymbol\Xi}\mathbf z .
    \label{grad_z}
\end{equation}
Then, the Riemannian gradient is obtained by projecting the Euclidean gradient
onto the tangent space, i.e.,
\begin{equation}
    \mathrm{grad}\,\tilde{\mathcal P}(\mathbf z)
    =
    \nabla_{\mathbf z}\tilde{\mathcal P}(\mathbf z)
    -
    \mathrm{Re}
    \left\{
    \nabla_{\mathbf z}\tilde{\mathcal P}(\mathbf z)
    \odot
    \mathbf z^*
    \right\}
    \odot
    \mathbf z .
    \label{Rgrad_z}
\end{equation}

Then, problem (42) is solved by a Riemannian conjugate gradient
method with Armijo backtracking \cite{Yu2016}, where the unit-modulus constraint
is maintained by the standard normalization retraction
\[
\mathrm{Retr}_{\mathbf z}(\alpha\boldsymbol\zeta)
=
\frac{\mathbf z+\alpha\boldsymbol\zeta}
{|\mathbf z+\alpha\boldsymbol\zeta|}.
\]
After convergence, the analog beamformer \(\mathbf F\) is reconstructed
by placing the optimized entries of \(\mathbf z\) back into the
corresponding block-diagonal positions.

\subsection{Auxiliary Variables Optimization}

For given $\{\mathbf u,\mathbf F,\mathbf W,\mathbf D\}$, the auxiliary
variables $\boldsymbol\mu$, $\boldsymbol\xi^c$, and $\boldsymbol\xi^s$ can be
updated in closed form. By setting the corresponding first-order optimality
conditions of $\tilde{\mathcal G}$ to zero, we obtain
\begin{equation}\label{xic}
    \xi_k^c
    =
    \frac{
    \sqrt{1+\mu_k}
    \left(\mathbf h_k^{\rm H}\mathbf F\mathbf w_k\right)^*
    }{
    \sum_{j=1}^{K+1}
    \left|
    \mathbf h_k^{\rm H}\mathbf F\mathbf w_j
    \right|^2
    +
    \sigma_k^2
    },
    \quad k=1,\ldots,K,
\end{equation}
and
\begin{equation}\label{xis}
    \boldsymbol\xi^s
    =
    \frac{
    \sqrt{1+\mu_{K+1}}
    \left(
    \mathbf u^{\rm H}\mathbf H_s\mathbf F\mathbf W
    \right)^{\rm H}
    }{
    \sum_{c=1}^{C}
    \left\|
    \mathbf u^{\rm H}\mathbf H_c\mathbf F\mathbf W
    \right\|_2^2
    +
    \left\|
    \mathbf u^{\rm H}\mathbf H_s\mathbf F\mathbf W
    \right\|_2^2
    +
    \sigma_s^2
    }.
\end{equation}

The auxiliary variables introduced by the Lagrangian dual transform are updated as
\begin{equation}\label{muk}
    \mu_k
    =
    \frac{
    \left|
    \mathbf h_k^{\rm H}\mathbf F\mathbf w_k
    \right|^2
    }{
    \sum_{j=1,j\neq k}^{K+1}
    \left|
    \mathbf h_k^{\rm H}\mathbf F\mathbf w_j
    \right|^2
    +
    \sigma_k^2
    },
    \quad k=1,\ldots,K,
\end{equation}
and
\begin{equation}\label{muK+1}
    \mu_{K+1}
    =
    \frac{
    \left\|
    \mathbf u^{\rm H}\mathbf H_s\mathbf F\mathbf W
    \right\|_2^2
    }{
    \sum_{c=1}^{C}
    \left\|
    \mathbf u^{\rm H}\mathbf H_c\mathbf F\mathbf W
    \right\|_2^2
    +
    \sigma_s^2
    }.
\end{equation}

\subsection{RA Boresight Optimization}
Given $\{\mathbf u,\mathbf F,\mathbf W,\boldsymbol\mu,
\boldsymbol\xi^c,\boldsymbol\xi^s\}$, we optimize the element-wise RA
boresight configuration $\mathbf D=[\mathbf p_1,\ldots,\mathbf p_{N_t}]$. The RA boresight optimization subproblem is formulated as
\begin{subequations}\label{P_D}
\begin{align}
    \max_{\mathbf D}\quad
    & \tilde{\mathcal G}(\mathbf D)\\
    \mathrm{s.t.}\quad
    & \|\mathbf p_{n}\|_2=1,\quad n=1,\ldots,N_t,\label{P_D_1}\\
    & \cos\theta_{\max} \leq \mathbf p_{n}^{\rm T}\mathbf e_0\leq 1, 
    \quad n=1,\ldots,N_t, \label{P_D_2}
\end{align}
\end{subequations}
where $\mathbf e_0=[1,0,0]^{\rm T}$ denotes the outward normal direction of the TP.
The constraints in \eqref{P_D} characterize a spherical cap for each RA boresight, which accounts for the practical rotation range of each RA.

Problem \eqref{P_D} is non-convex for two main reasons: (i) the RA boresight
directions enter the communication and sensing channels through the
directional-gain factors, leading to nonlinear coupling with the hybrid
beamforming variables; and (ii) constraints \eqref{P_D_1}--\eqref{P_D_2}
restrict each RA boresight to a compact but nonconvex spherical-cap
feasible region. To address this challenge, we develop a 
Frank--Wolfe--based algorithm tailored to the spherical-cap constraint \cite{Peng2026}. 
The proposed update directly optimizes the boresight direction of each RA using its element gradient and a closed-form cap-constrained linear oracle.

Define the feasible spherical cap for each RA boresight as
\begin{equation}
    \mathcal C_{\rm cap}
    \triangleq
    \left\{
    \mathbf x\in\mathbb R^3:
    \|\mathbf x\|_2=1,\;
    \mathbf x^{\rm T}\mathbf e_0\geq \cos\theta_{\max}
    \right\}.
\end{equation}
Accordingly, the feasible set of problem \eqref{P_D} is the Cartesian product of $N_t$ spherical caps:
\begin{equation}
    \mathcal C
    \triangleq
    \prod_{n=1}^{N_t}\mathcal C_{\rm cap}.
\end{equation}

To implement the proposed Frank--Wolfe--based update, we first need the
gradient of $\tilde{\mathcal G}$ with respect to each RA boresight vector.
For the $n_t$-th RA, this gradient is defined component-wise as
\begin{equation}
    \frac{\partial \tilde{\mathcal G}}
    {\partial \mathbf p_{n_t}}
    =
    \left[
    \frac{\partial \tilde{\mathcal G}}
    {\partial [\mathbf p_{n_t}]_1},
    \frac{\partial \tilde{\mathcal G}}
    {\partial [\mathbf p_{n_t}]_2},
    \frac{\partial \tilde{\mathcal G}}
    {\partial [\mathbf p_{n_t}]_3}
    \right]^{\rm T}.
\end{equation}
The partial derivatives $\partial \tilde{\mathcal G}/\partial \mathbf p_{n_t}$ are derived in
\eqref{eq:dh_entry_dp}--\eqref{eq:dG_dp_component}. Therein, $\mathbf e_{n_t}\in\mathbb R^{N_t\times 1}$ denotes the $n_t$-th standard
basis vector, whose $n_t$-th entry is one and all other entries are zero.
\begin{figure*}[b]
    \centering
    \hrulefill
    \begin{align}
    \frac{\partial [\mathbf h_k]_{n_t}}
    {\partial [\mathbf p_{n_t}]_i}
    &=
    p\sqrt{G_0}
    \Bigg(
    \beta_{{\rm L},k}^{*}
    \left[
    \mathbf p_{n_t}^{\rm T}
    \mathbf u_{n_t}(\mathbf q_k)
    \right]_+^{p-1}
    [\mathbf b(\mathbf q_k)]_{n_t}
    [\mathbf u_{n_t}(\mathbf q_k)]_i
    \notag\\
    &\quad+
    \sum_{\ell=1}^{L_k}
    \beta_{{\rm NL},k,\ell}^{*}
    \left[
    \mathbf p_{n_t}^{\rm T}
    \mathbf u_{n_t}(\mathbf q_{k,\ell})
    \right]_+^{p-1}
    e^{j\frac{2\pi}{\lambda}
    \left(r_{n_t}^{\rm t}(\mathbf q_{k,\ell})-r_{k,\ell}\right)}
    [\mathbf u_{n_t}(\mathbf q_{k,\ell})]_i
    \Bigg),
    \quad i=1,2,3,
    \label{eq:dh_entry_dp}
    \\
    \frac{\partial \mathbf H_\chi}
    {\partial [\mathbf p_{n_t}]_i}
    &=
    p\sqrt{G_0}\beta_\chi
    \left[
    \mathbf p_{n_t}^{\rm T}
    \mathbf u_{n_t}(\mathbf q_\chi)
    \right]_+^{p-1}
    [\mathbf u_{n_t}(\mathbf q_\chi)]_i
    e^{-j\frac{2\pi}{\lambda}
    \left(r_{n_t}^{\rm t}(\mathbf q_\chi)-r_\chi\right)}
    \mathbf b_r(\mathbf q_\chi)\mathbf e_{n_t}^{\rm T},
    \quad \chi\in\{s,c\},
    \label{eq:dH_dp}
    \\
    \frac{\partial \tilde{\mathcal G}}
    {\partial [\mathbf p_{n_t}]_i}
    &=
    2\varpi_c\frac{1}{K}
    \sum_{k=1}^{K}
    \mathrm{Re}
    \left\{
    \sqrt{1+\mu_k}\xi_k^c
    \left(
    \frac{\partial \mathbf h_k}
    {\partial [\mathbf p_{n_t}]_i}
    \right)^{\rm H}
    \mathbf F\mathbf w_k
    \right\}
    -
    2\varpi_c\frac{1}{K}
    \sum_{k=1}^{K}
    |\xi_k^c|^2
    \sum_{j=1}^{K+1}
    \mathrm{Re}
    \left\{
    \left(
    \frac{\partial \mathbf h_k}
    {\partial [\mathbf p_{n_t}]_i}
    \right)^{\rm H}
    \mathbf F\mathbf w_j
    \mathbf w_j^{\rm H}\mathbf F^{\rm H}
    \mathbf h_k
    \right\}
    \notag\\
    &\quad+
    2\varpi_s
    \sqrt{1+\mu_{K+1}}
    \mathrm{Re}
    \left\{
    \mathbf u^{\rm H}
    \left(
    \frac{\partial \mathbf H_s}
    {\partial [\mathbf p_{n_t}]_i}
    \right)
    \mathbf F\mathbf W\boldsymbol\xi^s
    \right\}
    -
    2\varpi_s
    \|\boldsymbol\xi^s\|_2^2
    \sum_{c=1}^{C}
    \mathrm{Re}
    \left\{
    \mathbf u^{\rm H}
    \left(
    \frac{\partial \mathbf H_c}
    {\partial [\mathbf p_{n_t}]_i}
    \right)
    \mathbf F\mathbf W\mathbf W^{\rm H}
    \mathbf F^{\rm H}\mathbf H_c^{\rm H}\mathbf u
    \right\}
    \notag\\
    &\quad-
    2\varpi_s
    \|\boldsymbol\xi^s\|_2^2
    \mathrm{Re}
    \left\{
    \mathbf u^{\rm H}
    \left(
    \frac{\partial \mathbf H_s}
    {\partial [\mathbf p_{n_t}]_i}
    \right)
    \mathbf F\mathbf W\mathbf W^{\rm H}
    \mathbf F^{\rm H}\mathbf H_s^{\rm H}\mathbf u
    \right\},
    \quad i=1,2,3.
    \label{eq:dG_dp_component}
    \end{align}
\end{figure*}
At iteration $t$, the Euclidean gradient associated with $n$-th RA boresight can be given by
\begin{equation}
    \mathbf g_{n_t}^{(t)}
=
\frac{\partial \tilde{\mathcal G}}{\partial \mathbf p_{n_t}},
    \quad n_t=1,\ldots,N_t.
    \label{grad_block_D}
\end{equation}

Next, we construct a feasible ascent direction under the spherical cap constraint. Due to the unit-norm constraint $\|\mathbf p_{n_t}\|_2=1$, an arbitrary Euclidean perturbation is not necessarily feasible. For a small perturbation $\Delta\mathbf p_{n_t}$ around the current point $\mathbf p_{n_t}^{(t)}$, the unit-norm constraint requires
$\left\|\mathbf p_{n_t}^{(t)}+\Delta\mathbf p_{n_t}\right\|_2^2=1 $.
By neglecting the second-order term $\|\Delta\mathbf p_{n_t}\|_2^2$, we obtain
$\left(\mathbf p_{n_t}^{(t)}\right)^{\rm T}\Delta\mathbf p_{n_t}=0$.
This indicates that any first-order feasible perturbation should lie in the tangent space of the unit sphere at $\mathbf p_{n_t}^{(t)}$. Therefore, we project the Euclidean gradient $\mathbf g_{n_t}^{(t)}$ onto this tangent space as
\begin{equation}
    \bar{\mathbf g}_{n_t}^{(t)}
    =
    \left(
    \mathbf I_3
    -
    \mathbf p_{n_t}^{(t)}(\mathbf p_{n_t}^{(t)})^{\rm T}
    \right)
    \mathbf g_{n_t}^{(t)}.
    \label{proj_grad_D}
\end{equation}
The projected gradient $\bar{\mathbf g}_{n_t}^{(t)}$ characterizes the locally feasible ascent direction on the unit sphere.

Based on $\bar{\mathbf g}_{n_t}^{(t)}$, we linearize the objective around the current boresight $\mathbf p_{n_t}^{(t)}$ and search for the point on the spherical cap that is most aligned with the projected gradient. This gives the following cap-constrained linear oracle:
\begin{equation}
    \mathbf s_{n_t}^{(t)}
    =
    \arg\max_{\mathbf s\in\mathcal C_{\rm cap}}
    \left(\bar{\mathbf g}_{n_t}^{(t)}\right)^{\rm T}\mathbf s .
    \label{oracle_D}
\end{equation}
The oracle in \eqref{oracle_D} admits a simple geometric interpretation. If the normalized projected gradient lies inside the spherical cap, it is directly selected as the oracle solution. Otherwise, the zenith-angle constraint becomes active, and the optimal point lies on the boundary circle of the cap, whose tangential component is aligned with the projection of $\bar{\mathbf g}_{n_t}^{(t)}$ onto the plane orthogonal to $\mathbf e_0$.

For $\|\bar{\mathbf g}_{n_t}^{(t)}\|_2>0$, define
$\hat{\mathbf g}_{n_t}^{(t)}
    =
    \frac{\bar{\mathbf g}_{n_t}^{(t)}}{\|\bar{\mathbf g}_{n_t}^{(t)}\|_2}$
and
$\hat{\mathbf g}_{{n_t},\perp}^{(t)}
    =
    \hat{\mathbf g}_{n_t}^{(t)}
    -
    \left(
    (\hat{\mathbf g}_{n_t}^{(t)})^{\rm T}\mathbf e_0
    \right)\mathbf e_0$.
Then, the closed-form solution of \eqref{oracle_D} is given in \eqref{oracle_solution_D}, where $\boldsymbol\eta_{n_t}$ is any unit vector satisfying $\boldsymbol\eta_{n_t}^{\rm T}\mathbf e_0=0$. If $\|\bar{\mathbf g}_{n_t}^{(t)}\|_2=0$, no local ascent direction is available, and we set $\mathbf s_{n_t}^{(t)}=\mathbf p_{n_t}^{(t)}$.
\begin{figure*}[b] 
    \centering 
    \hrulefill 
    \begin{equation} 
        \mathbf s_{n_t}^{(t)} = 
        \begin{cases}
        \hat{\mathbf g}_{n_t}^{(t)}, & (\hat{\mathbf g}_{n_t}^{(t)})^{\rm T}\mathbf e_0 \geq \cos(\theta_{\max}), \\
        \cos(\theta_{\max})\mathbf e_0 + \sin(\theta_{\max}) \dfrac{ \hat{\mathbf g}_{{n_t},\perp}^{(t)} }{ \|\hat{\mathbf g}_{{n_t},\perp}^{(t)}\|_2 }, & (\hat{\mathbf g}_{n_t}^{(t)})^{\rm T}\mathbf e_0 < \cos(\theta_{\max}), \; \|\hat{\mathbf g}_{{n_t},\perp}^{(t)}\|_2>0, \\
        \cos(\theta_{\max})\mathbf e_0 + \sin(\theta_{\max})\boldsymbol\eta_{n_t}, & (\hat{\mathbf g}_{n_t}^{(t)})^{\rm T}\mathbf e_0 < \cos(\theta_{\max}), \; \|\hat{\mathbf g}_{{n_t},\perp}^{(t)}\|_2=0, 
        \end{cases} \label{oracle_solution_D} 
    \end{equation} 
    \end{figure*}
The Frank--Wolfe-based direction for the ${n_t}$-th RA is given by
\begin{equation}
    \boldsymbol\Delta_{n_t}^{(t)}
    =
    \mathbf s_{n_t}^{(t)}
    -
    \mathbf p_{n_t}^{(t)}.
    \label{FW_direction_D}
\end{equation}
The corresponding optimality gap is given by
\begin{equation}
    \sigma_{\rm FW}^{(t)}
    =
    \sum_{{n_t}=1}^{{N_t}}
    \left(
    \bar{\mathbf g}_{n_t}^{(t)}
    \right)^{\rm T}
    \boldsymbol\Delta_{n_t}^{(t)}.
    \label{FW_gap_D}
\end{equation}
If $\sigma_{\rm FW}^{(t)}$ is below a prescribed threshold, the boresight optimization is terminated.
Otherwise, the boresight vectors are updated according to
\begin{equation}
    \mathbf p_{n_t}^{(t+1)}
    =
    \frac{
    \mathbf p_{n_t}^{(t)}
    +
    \rho^{(t)}
    \boldsymbol\Delta_{n_t}^{(t)}
    }{
    \left\|
    \mathbf p_{n_t}^{(t)}
    +
    \rho^{(t)}
    \boldsymbol\Delta_{n_t}^{(t)}
    \right\|_2
    },
    \quad {n_t}=1,\ldots,{N_t},
    \label{D_update}
\end{equation}
where step size $\rho^{(t)}\in(0,1]$ is determined by Armijo backtracking. Specifically, $\rho^{(t)}$ is accepted if
\begin{equation}
    \tilde{\mathcal G}(\mathbf D^{(t+1)})
    \geq
    \tilde{\mathcal G}(\mathbf D^{(t)})
    +
    c_{\rm A}\rho^{(t)}\sigma_{\rm FW}^{(t)},
    \label{Armijo_D}
\end{equation}
where $c_{\rm A}\in(0,1)$ is the Armijo parameter. This sufficient-increase condition guarantees that each accepted update does not decrease the transformed objective.

\begin{algorithm}[!t]
    \caption{Proposed Frank--Wolfe-Based Algorithm for RA Boresight Optimization}
    \label{alg:D_update}
\begin{algorithmic}[1]
    \State \textbf{Initialize:} feasible boresight configuration $\mathbf D^{(0)}$, tolerance $\epsilon>0$, maximum iterations $I_D$, and iteration index $t=0$.
    \Repeat
        \For{each RA $n_t=1,\ldots,N_t$}
            \State Compute the element gradient $\mathbf g_{n_t}^{(t)}$ via \eqref{grad_block_D}.
            \State Project $\mathbf g_{n_t}^{(t)}$ onto the tangent space via \eqref{proj_grad_D}.
            \State Obtain the oracle point $\mathbf s_{n_t}^{(t)}$ via \eqref{oracle_solution_D}.
            \State Compute the search direction $\boldsymbol\Delta_{n_t}^{(t)}$ via \eqref{FW_direction_D}.
        \EndFor
        \State Compute the optimality gap $\sigma_{\rm FW}^{(t)}$ via \eqref{FW_gap_D}.
        \If{$\sigma_{\rm FW}^{(t)}\leq \epsilon$}
            \State \textbf{break}
        \EndIf
        \State Find step size $\rho^{(t)}\in(0,1]$ via Armijo backtracking to satisfy \eqref{Armijo_D}.
        \For{each RA $n_t=1,\ldots,N_t$}
            \State Update $\mathbf p_{n_t}^{(t+1)}$ via \eqref{D_update}.
        \EndFor
        \State Update $t=t+1$.
    \Until{The objective value $\tilde{\mathcal G}(\mathbf D)$ converges or the maximum iteration number $I_D$ is reached.}\\
\textbf{Output:} $\mathbf D^\star$
\end{algorithmic}
\end{algorithm}

\begin{proposition}\label{Proposition:2}
The proposed Frank--Wolfe--based boresight update generates a non-decreasing sequence of transformed objective values $\{\tilde{\mathcal G}(\mathbf D^{(t)})\}$. Moreover, this sequence is convergent.
\end{proposition}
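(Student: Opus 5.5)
The plan has two parts: monotonicity, which follows from the acceptance rule, and convergence, which follows from boundedness of $\tilde{\mathcal G}$ on the feasible set.

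\emph{Monotonicity.} At each iteration, the update \eqref{D_update} is accepted only when the Armijo condition \eqref{Armijo_D} holds. I would first check that $\sigma_{\rm FW}^{(t)}\ge 0$. Since $\mathbf p_{n_t}^{(t)}\in\mathcal C_{\rm cap}$ and $\mathbf s_{n_t}^{(t)}$ maximizes $(\bar{\mathbf g}_{n_t}^{(t)})^{\rm T}\mathbf s$ over $\mathcal C_{\rm cap}$, we get $(\bar{\mathbf g}_{n_t}^{(t)})^{\rm T}\boldsymbol\Delta_{n_t}^{(t)}\ge 0$ for every $n_t$. Summing over $n_t$ gives $\sigma_{\rm FW}^{(t)}\ge0$. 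Then \eqref{Armijo_D} yields $\tilde{\mathcal G}(\mathbf D^{(t+1)})\ge\tilde{\mathcal G}(\mathbf D^{(t)})$.

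Two further points need checking.

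\emph{Feasibility of the iterates.} The normalized convex combination must stay in the cap. Both $\mathbf p_{n_t}^{(t)}$ and $\mathbf s_{n_t}^{(t)}$ have inner product with $\mathbf e_0$ at least $\cos\theta_{\max}\ge0$. The vector $(1-\rho)\mathbf p+\rho\mathbf s$ has norm at most $1$, so after normalization its inner product with $\mathbf e_0$ is still at least $\cos\theta_{\max}$. Hence the iterate remains in $\mathcal C$.

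\emph{Existence of an acceptable step.} When $\sigma_{\rm FW}^{(t)}>\epsilon$, backtracking must terminate. I would show this with a first-order expansion of $\tilde{\mathcal G}$ along the retracted curve $\rho\mapsto\mathbf D^{(t+1)}(\rho)$. Its derivative at $\rho=0$ equals $\sum_{n_t}(\mathbf g_{n_t}^{(t)})^{\rm T}(\mathbf I_3-\mathbf p_{n_t}\mathbf p_{n_t}^{\rm T})\boldsymbol\Delta_{n_t}^{(t)}=\sigma_{\rm FW}^{(t)}$, because the normalization map has Jacobian equal to this tangent projection at $\rho=0$. Since $c_{\rm A}<1$, the condition \eqref{Armijo_D} holds for all sufficiently small $\rho$. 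If backtracking instead ends at its minimum step without acceptance, the algorithm keeps $\mathbf D^{(t)}$, and monotonicity is preserved trivially.

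\emph{Main obstacle: smoothness.} This expansion needs $\tilde{\mathcal G}$ to be continuously differentiable in $\mathbf D$. The gain involves $[\cdot]_+^{p}$, which I would handle by assuming $p\ge1$, or by noting the function is $C^1$ wherever $\mathbf p^{\rm T}\mathbf u>0$. I would argue the second: with $\theta_{\max}\le\pi/2$ and all objects in front of the array, these inner products stay positive in the relevant region. This is the step I expect to be most delicate.

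\emph{Convergence.} With $\mathbf u,\mathbf F,\mathbf W,\boldsymbol\mu,\boldsymbol\xi^c,\boldsymbol\xi^s$ fixed, $\tilde{\mathcal G}$ is a continuous function of $\mathbf D$. The gain vector $\mathbf g$ is continuous in $\mathbf D$, and the channels enter $\tilde{\mathcal G}$ polynomially. The feasible set $\mathcal C$ is a finite product of closed and bounded caps, hence compact. So $\tilde{\mathcal G}$ is bounded above on $\mathcal C$; explicitly, $G_n\le G_0$ gives bounded channels and therefore a bounded quadratic form. A non-decreasing sequence that is bounded above converges by the monotone convergence theorem, which completes the argument.
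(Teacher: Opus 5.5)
Your proposal is correct and takes essentially the same route as the paper: oracle optimality at the feasible point gives $\sigma_{\rm FW}^{(t)}\ge0$, the Armijo condition \eqref{Armijo_D} gives monotonicity, and boundedness of $\tilde{\mathcal G}$ on the compact set $\mathcal C$ gives convergence; your check that the normalized iterate stays in $\mathcal C_{\rm cap}$, your backtracking-termination argument, and the explicit bound via $G_n\le G_0$ are valid extra rigor that the paper leaves implicit. One side remark is wrong but not load-bearing: $\mathbf p_n^{\rm T}\mathbf u_n(\mathbf q)$ need not stay positive, since with $\theta_{\max}=60^\circ$ and objects at azimuths up to $80^\circ$ the angle can reach $140^\circ$ and the clipping becomes active; what actually saves you is that $[x]_+^{p}$ is $C^1$ for $p>1$ (not $p\ge1$, because $p=1$ has a kink), which covers the paper's $p=2$, and your fallback of keeping $\mathbf D^{(t)}$ preserves monotonicity regardless.
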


\begin{proof}
See Appendix B.
\end{proof}

The Frank--Wolfe-based algorithm for solving Problem \eqref{P_D} is summarized in Algorithm~\ref{alg:D_update}.

\subsection{Overall Algorithm}
\begin{algorithm}[!t]
    \caption{Proposed AO Algorithm for RA-Enabled Near-Field ISAC Design}
    \label{alg:final_algorithm}
\begin{algorithmic}[1]
    \State \textbf{Initialize:} $\mathbf u^{(0)}$, $\mathbf F^{(0)}$, $\mathbf W^{(0)}$, $\mathbf D^{(0)}$, $\boldsymbol\mu^{(0)}$, $\boldsymbol\xi^{c(0)}$, $\boldsymbol\xi^{s(0)}$, tolerance $\epsilon_{\rm AO}>0$, maximum iterations $I_{\rm AO}$, and iteration index $t=0$.
    \Repeat
        \State Obtain $\mathbf u^{(t)}$ via \eqref{eq:u_optimal}.
        \State Obtain $\mathbf W^{(t)}$ via \eqref{eq:w_lambda}.
        \State Update $\mathbf F^{(t)}$ by solving \eqref{F_optimize} with the proposed Riemannian conjugate gradient method.
        \State Update $\boldsymbol\mu^{(t)}$ via \eqref{muk} and \eqref{muK+1}.
        \State Update $\boldsymbol\xi^{c(t)}$ and $\boldsymbol\xi^{s(t)}$ via \eqref{xic} and \eqref{xis}, respectively.
        \State Update $\mathbf D^{(t)}$ by Algorithm~\ref{alg:D_update}.
        \State Update $t=t+1$.
    \Until{The objective value of problem \eqref{P_1} converges or the maximum iterations $I_{\rm AO}$ is reached.}\\
\textbf{Output:} $\mathbf u^\star$, $\mathbf F^\star$, $\mathbf W^\star$, $\mathbf D^\star$.
\end{algorithmic}
\end{algorithm}
The detailed overall AO algorithm for solving the problem \eqref{P_1} is summarized in $\textbf{Algorithm}$ \ref{alg:final_algorithm}.
In each outer iteration, the receive beamformer $\mathbf u$, digital beamformer $\mathbf W$, analog beamformer $\mathbf F$, auxiliary variables $\{\boldsymbol\mu,\boldsymbol\xi^c,\boldsymbol\xi^s\}$, and RA boresight directions $\mathbf D$ are updated alternately with the other variables fixed. Each subproblem is solved either optimally or by a monotonic update rule, and thus the transformed objective value $\tilde{\mathcal G}$ is non-decreasing over the AO iterations. Since the transmit power is finite and the feasible sets of $\mathbf u$, $\mathbf F$, and $\mathbf D$ are compact, the objective sequence is upper bounded. Therefore, the objective sequence generated by Algorithm~\ref{alg:final_algorithm} is convergent.

For Algorithm~\ref{alg:final_algorithm}, let $I_{\rm AO}$ be the outer AO iterations, $I_F$ the number of Riemannian conjugate gradient iterations, and $I_D$ the Frank--Wolfe-based steps. The receive beamforming block is dominated by the matrix inversion in \eqref{eq:u_optimal}, yielding $\mathcal O(N_r^3)$ complexity. The digital beamforming block mainly requires solving a $B$-dimensional linear system in \eqref{eq:w_lambda}, resulting in $\mathcal O(B^3)$ complexity. For the analog beamforming block, each Riemannian conjugate gradient iteration is dominated by the matrix-vector product associated with $\tilde{\boldsymbol\Xi}$, yielding $\mathcal O(N_t^2)$ per iteration and thus $\mathcal O(I_FN_t^2)$ in total. The auxiliary-variable updates are in closed form and incur lower-order complexity.
For the RA boresight block, the gradients with respect to all element-wise
RA boresights require approximately
$\mathcal O(N_t(K+1)(K+C))$ per Frank--Wolfe-based step. The tangent
projection, spherical-cap oracle, search-direction construction, and
boresight update are performed over $N_t$ RA boresight vectors and incur
only lower-order $\mathcal O(N_t)$ complexity. Hence, the RA boresight
block costs $\mathcal O(I_DN_t(K+1)(K+C))$ per outer iteration.
Therefore, the overall complexity of Algorithm~\ref{alg:final_algorithm} is approximately given by
$\mathcal O(
    I_{\rm AO}
    (
    N_r^3
    +
    B^3
    +
    I_FN_t^2
    +
    I_DN_t(K+1)(K+C)
    )
    )$.

\section{Numerical Results}
In this section, numerical results are provided to evaluate the performance of the proposed RA-enabled near-field beamforming-ISAC design.

\subsection{Simulation Setup}
In the simulations, the carrier frequency is set to $f_c=30$~GHz. The BS
is equipped with an $8\times 8$ transmit UPA with $N_t=64$ RAs and a
$4\times 4$ receive UPA with $N_r=16$ FPAs. The side aperture of each
UPA is set to $D_{\rm side}=50\lambda$, and thus the maximum array
aperture is $D_{\rm ap}=\sqrt{2}D_{\rm side}$. The corresponding
Rayleigh distance is
$d_{\rm Ray}=2D_{\rm ap}^2/\lambda=100~\mathrm{m}$.
The sub-connected hybrid transmitter employs $B=8$ RF chains, each
connected to $M=8$ antennas. Each transmit RA independently adjusts its
boresight direction subject to the maximum rotation angle constraint.
Unless otherwise specified, the transmit power is set to $P=-10$~dBm.
The communication-sensing weights are set to
$\varpi_c=\varpi_s=0.5$, and the user weights are set to
$\alpha_k=1/K$. For the tradeoff evaluation, $\varpi_c$ is varied from
$0$ to $1$ with $\varpi_s=1-\varpi_c$.
The receiver noise power is computed as
$\sigma^2=-174+10\log_{10}(B_{\rm w})+\mathrm{NF}$ dBm, where the
bandwidth is $B_{\rm w}=100$~MHz and the noise figure is
$\mathrm{NF}=7$~dB, yielding $\sigma^2=-87$~dBm. The RCS of the target and clutters are set to -5 dbsm and -7 dbsm. The RA directivity
factor is $p=2$, and the maximum rotation angle is
$\theta_{\max}=60^\circ$.
The BS serves $K=4$ users and senses one target in the
presence of $C=5$ clutters. The user and clutter distances are uniformly
generated from $[15,30]$~m, their azimuth angles from
$[-80^\circ,80^\circ]$, and their elevation angles from
$[0^\circ,20^\circ]$. The target is fixed at
$(15~\mathrm{m},60^\circ,0^\circ)$. Since the maximum considered range is
$30$~m, all users, the target, and clutters are located in the near-field
region. Each user channel contains one LoS path and $L=8$ NLoS paths.
The maximum numbers of AO and RA boresight-update iterations are $50$
and $100$, respectively, and the AO convergence tolerance is $10^{-4}$.
All results are averaged over $200$ Monte Carlo realizations.

For comparison, the following schemes are considered:
\begin{itemize}
    \item \textbf{FPA:} Fixed-position array with isotropic antenna elements and sub-connected hybrid beamforming.

    \item \textbf{Fixed-RA:} Fixed-position RA array with sub-connected hybrid beamforming, where all RAs are oriented toward the array broadside.

    \item \textbf{Element RA:} Element-wise RA array with
    hybrid beamforming, where each transmit RA has an independently optimized
    boresight direction.

    \item \textbf{Fully-digital FPA:} Fixed-position array with fully-digital beamforming.
\end{itemize}
\subsection{Effect of antenna rotation on sensing performance}
\begin{figure*}[!t]
    \centering

    \begin{subfigure}{0.24\linewidth}
        \centering
        \includegraphics[width=\linewidth]{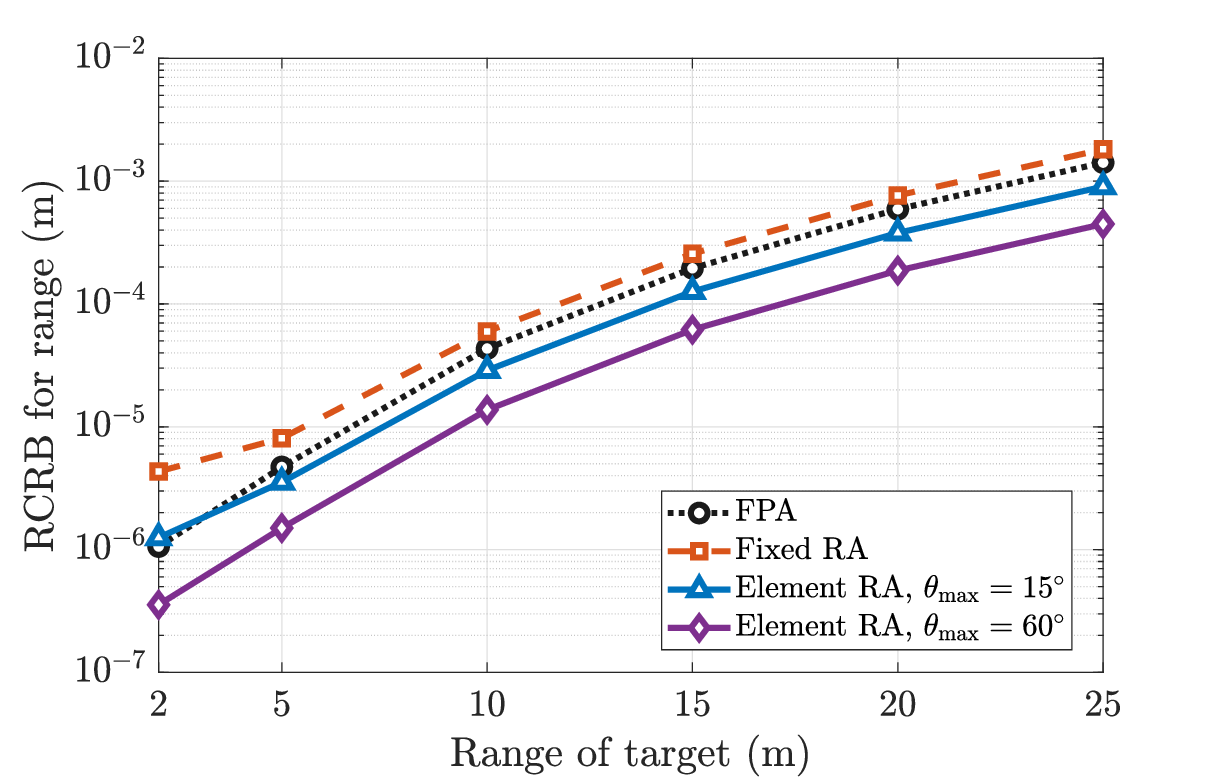}
        \caption{\justifying RCRB for range estimation versus target range.}
        \label{fig:CRB1}
    \end{subfigure}
    \hfill
    \begin{subfigure}{0.24\linewidth}
        \centering
        \includegraphics[width=\linewidth]{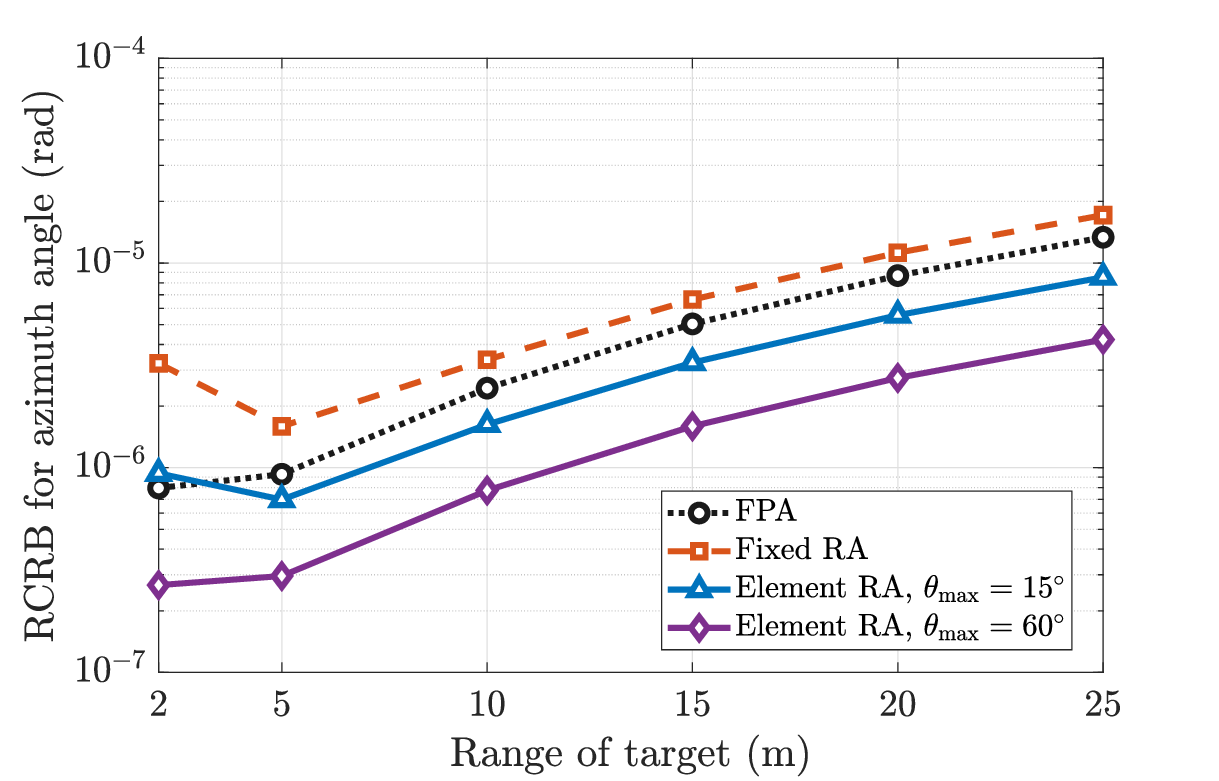}
        \caption{\justifying RCRB for angle estimation versus target range.}
        \label{fig:CRB2}
    \end{subfigure}
    \hfill
    \begin{subfigure}{0.24\linewidth}
        \centering
        \includegraphics[width=\linewidth]{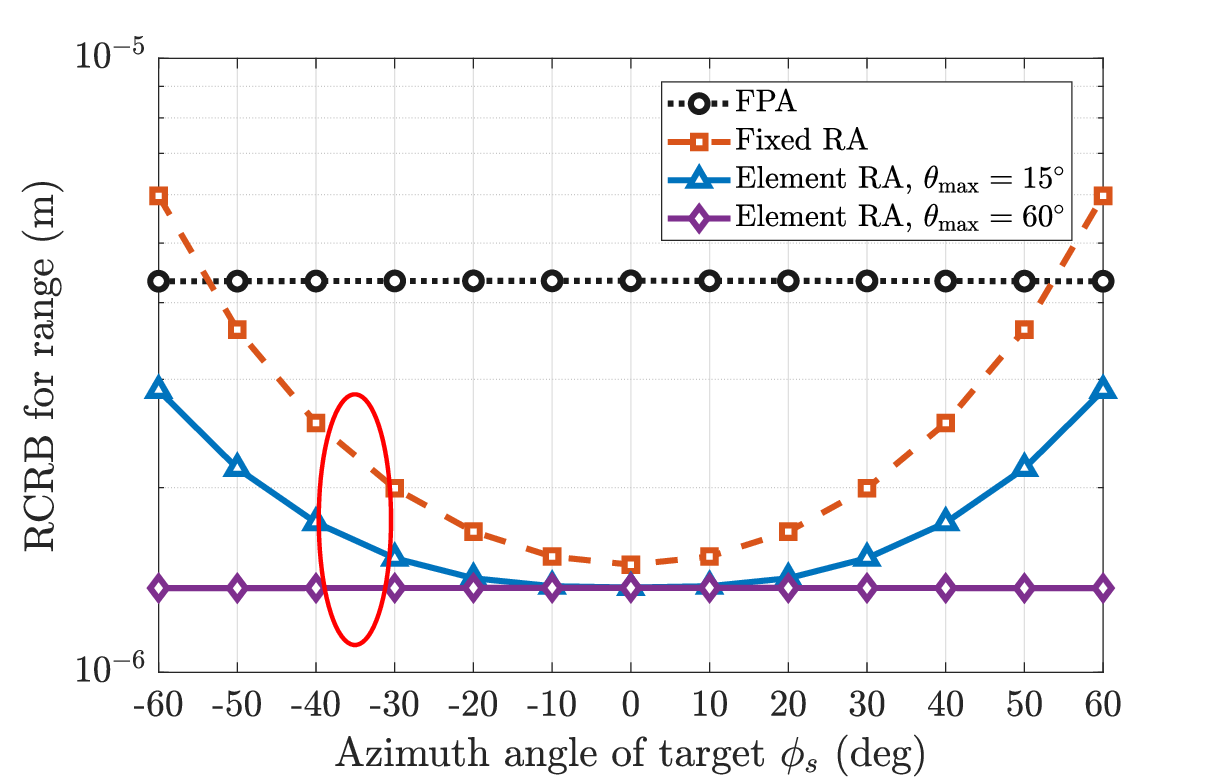}
        \caption{\justifying RCRB for range estimation versus target angle.}
        \label{fig:CRB3}
    \end{subfigure}
    \hfill
    \begin{subfigure}{0.24\linewidth}
        \centering
        \includegraphics[width=\linewidth]{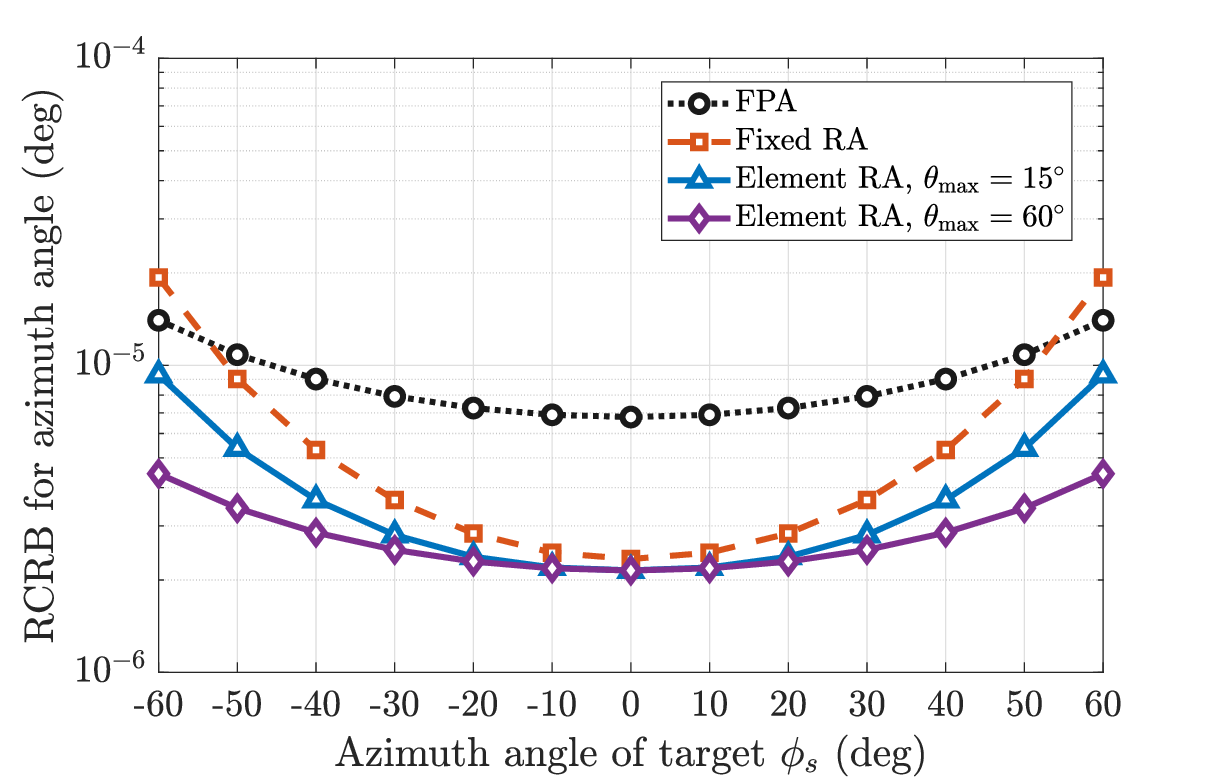}
        \caption{\justifying RCRB for angle estimation versus target angle.}
        \label{fig:CRB4}
    \end{subfigure}
    \captionsetup{font={small}}
    \caption{\justifying Effect of antenna rotation on near-field sensing accuracy.}
    \label{fig:CRB_all}
\end{figure*}

To further reveal the impact of antenna rotation on near-field sensing,
we evaluate the RCRB of the target location parameters. The unknown
complex target response is treated as a nuisance parameter, and the
detailed RCRB derivation is provided in Appendix~\ref{app:RCRB}. To
isolate the effect of RA boresight control, an equal-power orthogonal
probing waveform is adopted for all schemes.
For each considered target location, the RA boresights are configured according to the corresponding target direction under the rotation constraint.

Figs.~\ref{fig:CRB1} and \ref{fig:CRB2} show the range and azimuth RCRBs
versus the target range, respectively, where the target azimuth is fixed
at $45^\circ$. The RCRBs increase with the target range due to the
stronger round-trip propagation loss and the reduced echo sensitivity.
For Element RA with $\theta_{\max}=15^\circ$, the allowable rotation
range is insufficient to cover the $45^\circ$ off-broadside target, and
thus the sensing gain is limited. In contrast, when
$\theta_{\max}=60^\circ$, the RA boresights can be aligned with the
target direction, leading to much lower RCRBs than FPA and Fixed RA.
This confirms that the sensing gain of RA depends critically on whether
the rotation range can cover the target direction.

Figs.~\ref{fig:CRB3} and \ref{fig:CRB4} show the range and azimuth RCRBs
versus the target azimuth angle, respectively, with the target range
fixed at $10$~m. Fixed RA performs well near broadside but degrades
rapidly for off-broadside targets because its boresights are fixed. In
contrast, Element RA maintains lower RCRBs by adapting its boresights to
the target direction. When $\theta_{\max}=15^\circ$, the RCRBs increase
once the target moves outside the feasible rotation range, since the RAs
cannot fully align with the target. With $\theta_{\max}=60^\circ$, the
considered angular region can be covered, yielding consistently low
RCRBs. These results verify that element-wise rotation improves
near-field sensing accuracy and enhances robustness to off-broadside
targets when the rotation range is sufficiently large.
\subsection{Simulation Results}
\begin{figure}[!t]
    \centering
    \includegraphics[width=\linewidth]{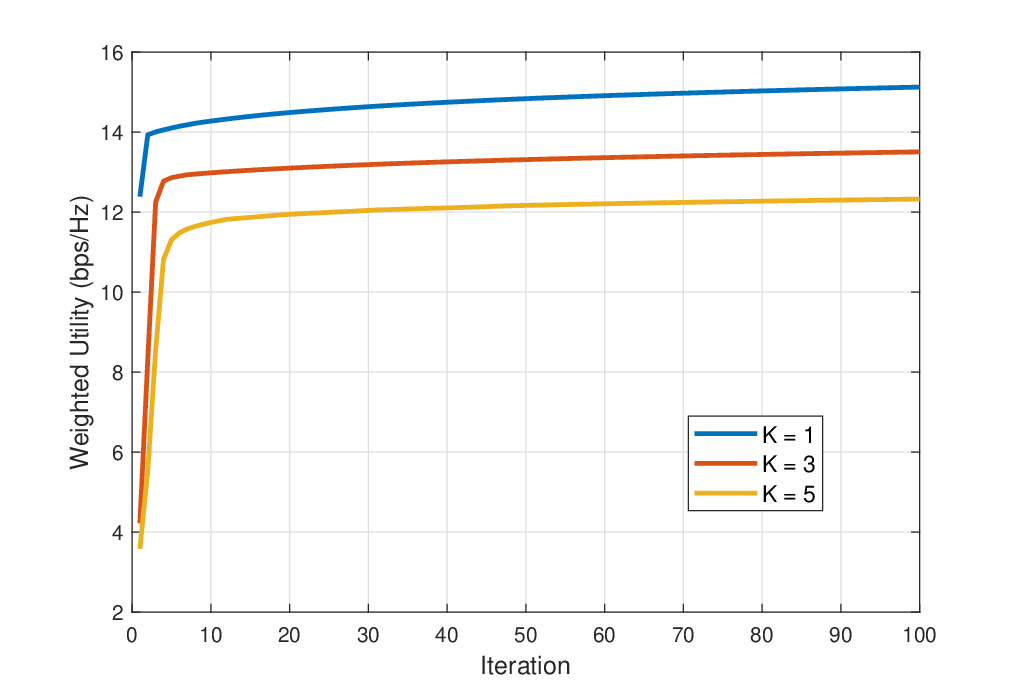}
    \captionsetup{font={small}}
    \caption{\justifying Convergence behavior of Algorithm~2.}
    \label{fig:convergence}
\end{figure}
Fig.~\ref{fig:convergence} shows the convergence behavior of
Algorithm~2 under different numbers of users. The weighted utility
increases rapidly in the first few iterations and becomes nearly stable
after about $10$ iterations, which verifies the convergence efficiency of
the proposed AO algorithm. As $K$ increases, the final utility decreases
because more communication streams have to share the same transmit power
and RF-chain resources. Meanwhile, stronger multi-user interference
makes the joint communication-sensing design more challenging.

\begin{figure}[t]
    \centering
    \includegraphics[width=\linewidth]{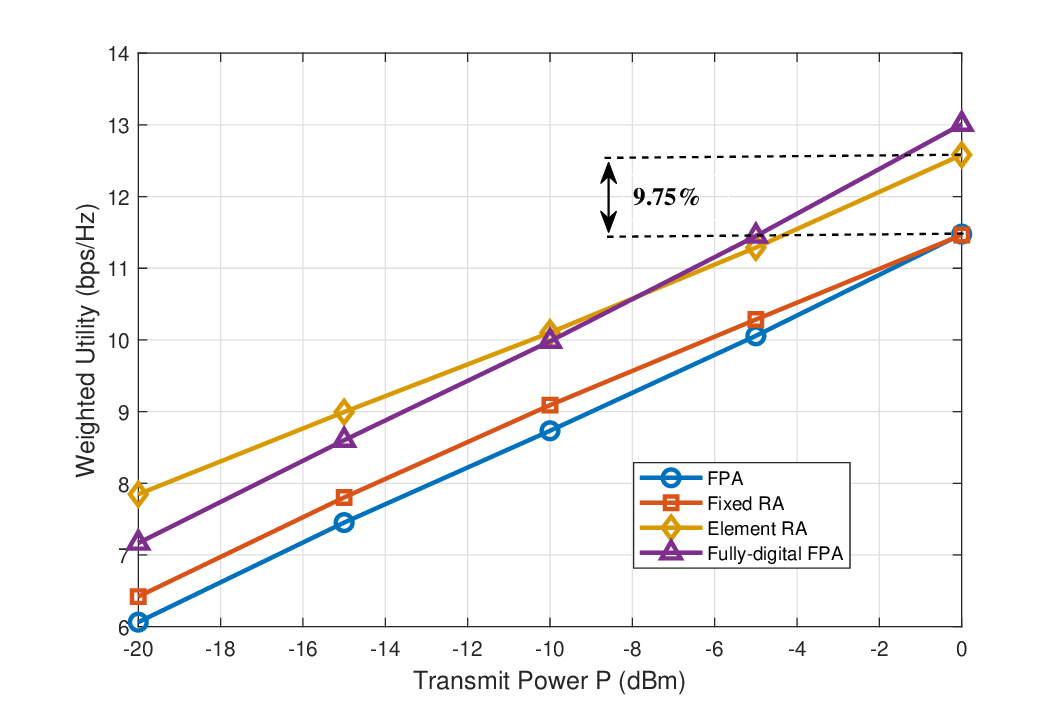}
    \captionsetup{font={small}}
    \caption{\justifying Weighted utility versus transmit power.}
    \label{fig:P}
\end{figure}
Fig.~\ref{fig:P} compares the weighted communication-sensing utility
versus the transmit power. All schemes improve with $P$ due to the
enhanced communication SINR and sensing SCNR. In the low-power regime,
the proposed Element RA achieves comparable or higher utility than
Fully-digital FPA, since the directional gain brought by antenna rotation
is particularly effective when the system is power-limited. As $P$
increases, Fully-digital FPA becomes slightly superior because the system
is less noise-limited and its larger spatial degrees of freedom are more
effective for interference and clutter suppression. Nevertheless, Element
RA consistently outperforms FPA and Fixed RA, showing the benefit of
adaptive element-wise boresight control.

\begin{figure}[t]
    \centering
    \includegraphics[width=\linewidth]{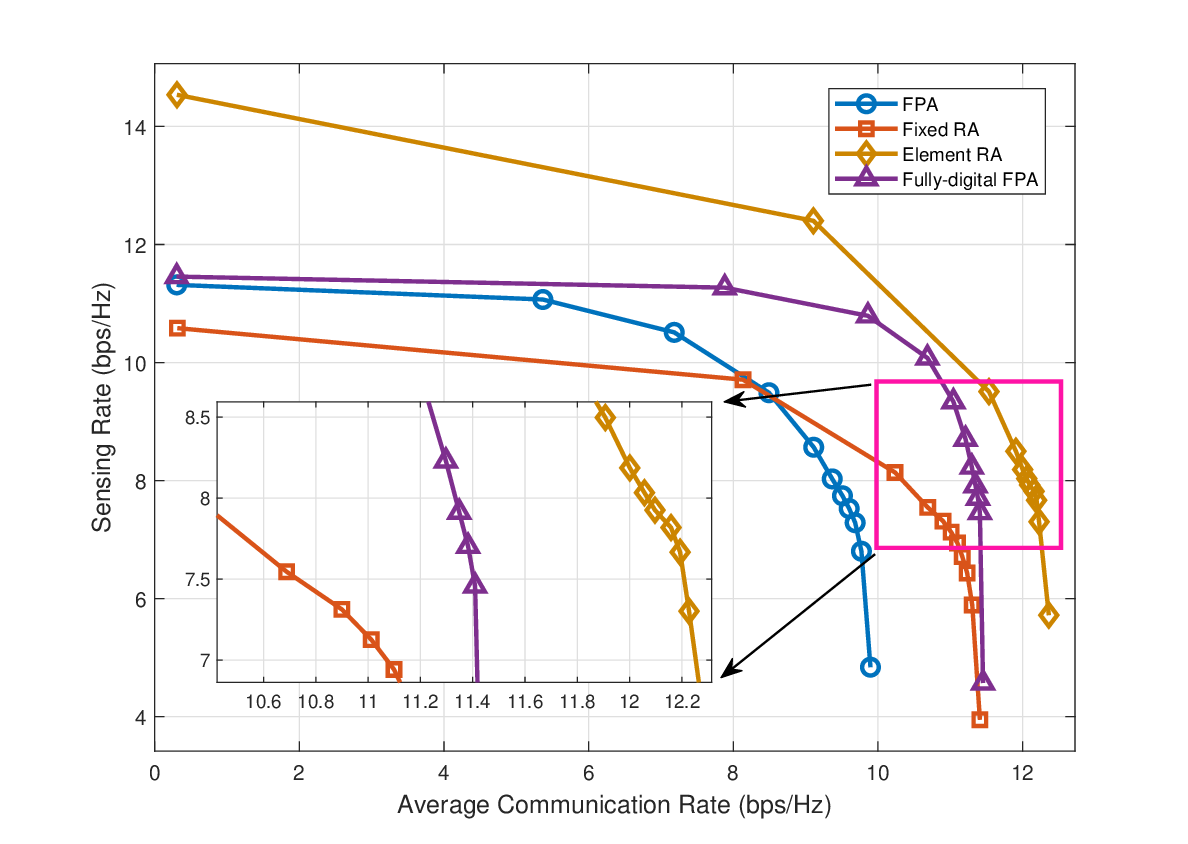}
    \captionsetup{font={small}}
    \caption{\justifying Tradeoff between sensing and communication performance.}
    \label{fig:tradeoff}
\end{figure}
Fig.~\ref{fig:tradeoff} illustrates the communication-sensing tradeoff by
varying the weighting factors. Increasing the communication rate
generally reduces the sensing rate, which reflects the inherent resource
competition between communication and sensing. The proposed Element RA
moves the tradeoff curve toward the upper-right region, indicating that
antenna rotation improves the achievable ISAC Pareto frontier rather
than only one individual metric. Compared with Fully-digital FPA,
Element RA achieves a better tradeoff because the additional
orientation-domain gain enhances both user links and target illumination.

\begin{figure}[t]
    \centering
    \includegraphics[width=\linewidth]{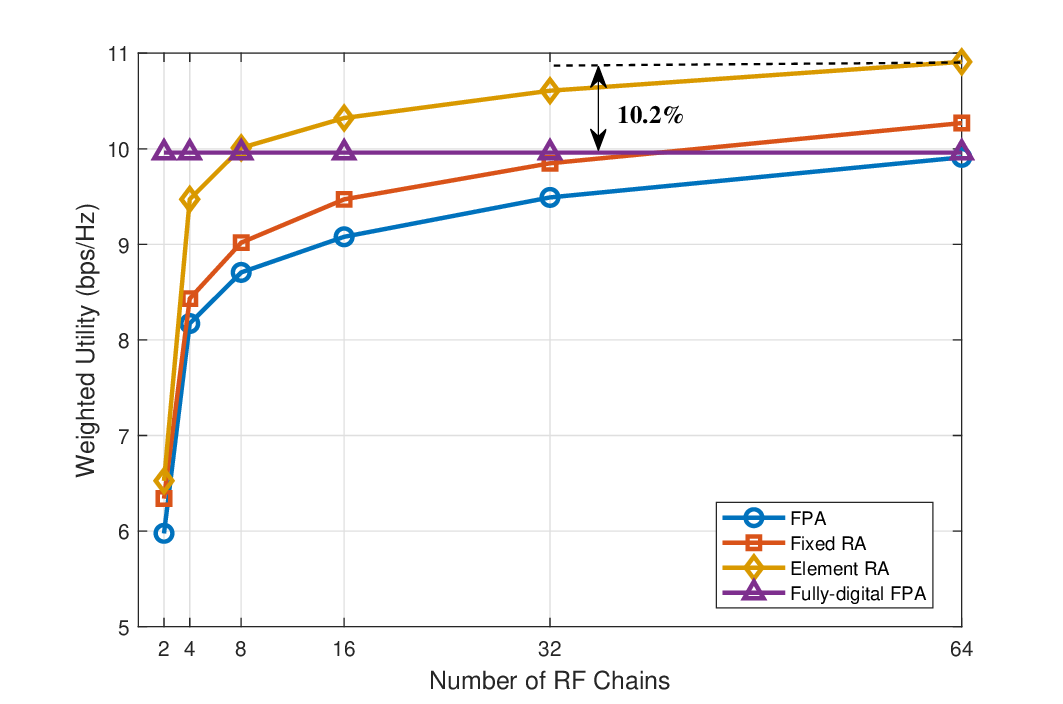}
    \captionsetup{font={small}}
    \caption{\justifying Weighted utility versus number of RF chains.}
    \label{fig:M}
\end{figure}
Fig.~\ref{fig:M} shows the weighted communication-sensing utility versus
the number of RF chains. The utility of all hybrid schemes increases with
$B$, since more RF chains provide higher digital precoding flexibility.
The improvement is more significant when $B$ is small and gradually
saturates as $B$ increases. Fully-digital FPA is independent of $B$ and
serves as a reference. The proposed Element RA approaches this reference
with only a small number of RF chains and surpasses it when $B$ becomes
large. This demonstrates that antenna rotation can effectively alleviate
the RF-chain limitation of sub-connected hybrid beamforming.

\begin{figure}[t]
    \centering
    \includegraphics[width=\linewidth]{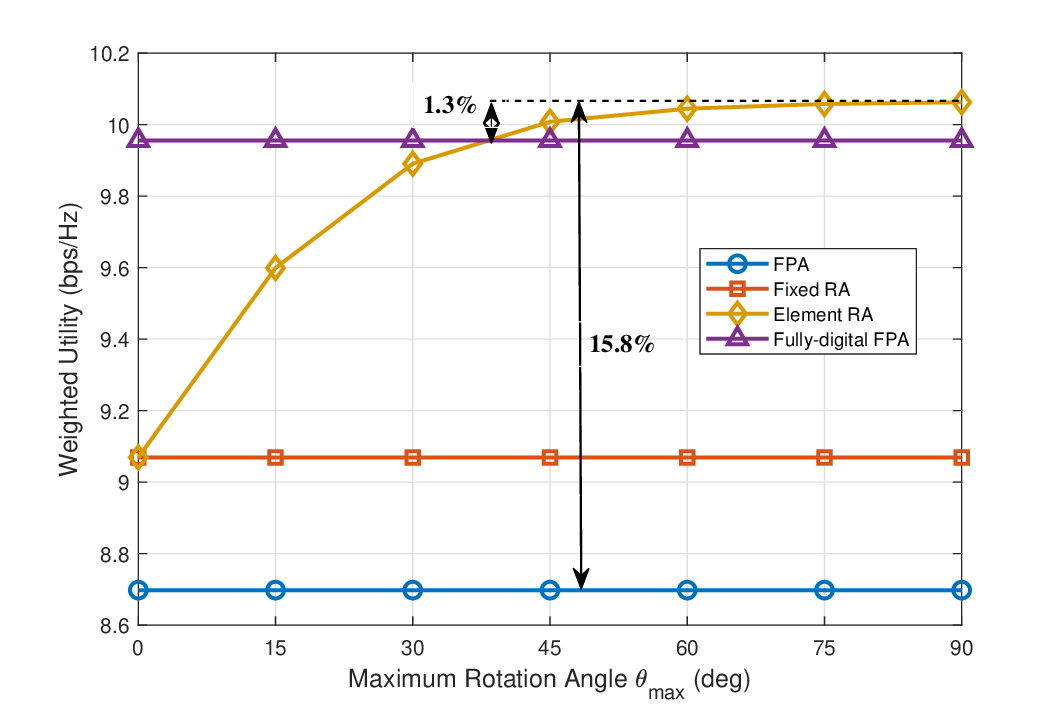}
    \captionsetup{font={small}}
    \caption{\justifying Weighted utility versus allowable maximum rotation angle.}
    \label{fig:ThetaMax}
\end{figure}
Fig.~\ref{fig:ThetaMax} shows the weighted utility versus the maximum
rotation angle $\theta_{\max}$. The FPA, Fixed RA, and Fully-digital FPA
schemes remain unchanged since they are independent of
$\theta_{\max}$. When $\theta_{\max}=0^\circ$, Element RA reduces to
Fixed RA. As $\theta_{\max}$ increases, Element RA gains more freedom to
align its boresights with off-broadside users and the target, resulting
in a higher utility. The performance gradually saturates when the
rotation range is sufficiently large, which indicates that a moderate
rotation capability is enough to capture most of the orientation-domain
gain. This result provides a useful guideline for practical RA hardware
design.

\begin{figure}[t]
    \centering
    \includegraphics[width=\linewidth]{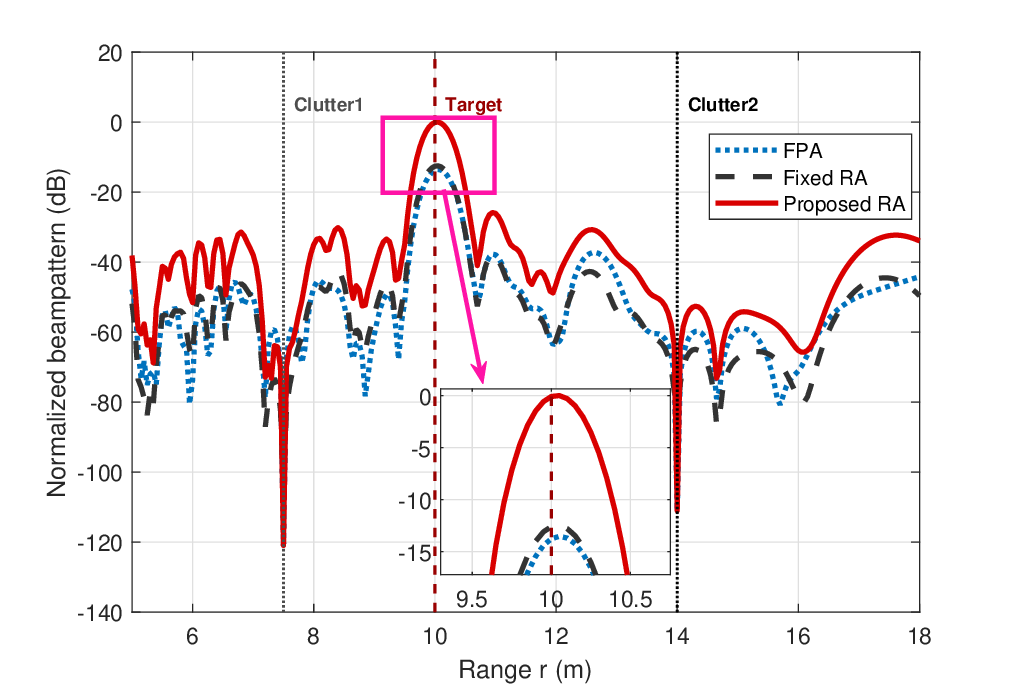}
    \captionsetup{font={small}}
    \caption{\justifying Beampattern comparison of different schemes.}
    \label{fig:beampattern}
\end{figure}
Fig.~\ref{fig:beampattern} shows the normalized beampattern along the
target direction, where the target and two clutters share the same
azimuth angle but are located at different ranges. The beampattern peaks
around the target range and exhibits deep nulls at the clutter ranges,
demonstrating the range-domain focusing capability of near-field
beamforming. This is fundamentally different from far-field beamforming,
where same-angle objects are difficult to separate. Compared with FPA
and Fixed RA, the proposed Element RA achieves stronger target
illumination and deeper same-angle clutter suppression, which verifies
the effectiveness of RA boresight optimization for near-field sensing.

\section{Conclusion}
In this paper, we studied an RA-enabled near-field ISAC
system with sub-connected hybrid beamforming. By incorporating
spherical-wave propagation and orientation-dependent antenna gains, we
developed a near-field channel model for multi-user communication and
target sensing in the presence of clutters. We formulated a weighted
communication-sensing utility maximization problem by jointly optimizing
the receive sensing beamformer, hybrid transmit beamformer, and RA
boresight directions. To solve the resulting non-convex problem, an
AO algorithm was proposed based on fractional programming, Riemannian
optimization, and a spherical-cap Frank-Wolfe-based method.
Numerical results demonstrated the convergence and effectiveness of the proposed algorithm.
 The RCRB and beampattern results further demonstrated that
antenna rotation improves sensing accuracy and 
clutter suppression in the near field. These results confirm the
potential of element-wise RA for enhancing the communication-sensing
tradeoff in near-field ISAC systems.

\appendices
\section{Proof of Proposition~\ref{Proposition:1}}
For given $\{\mathbf F,\mathbf W,\mathbf D\}$, define
\begin{equation}
    \mathbf R_{\rm i}
    =
    \sum_{c=1}^{C}
    \tilde{\mathbf H}_c\tilde{\mathbf H}_c^{\rm H}
    +
    \sigma_s^2\mathbf I_{N_r},
    \quad
    \tilde{\mathbf H}_c=\mathbf H_c\mathbf F\mathbf W .
\end{equation}
Since $\sigma_s^2>0$, we have $\mathbf R_{\rm i}\succ\mathbf 0$.
According to the point-like target channel in \eqref{target echo} and by defining
$\mathbf a_s^{\rm H}=\mathbf g(\mathbf q_s,\mathbf D)\odot
\mathbf b^{\rm H}(\mathbf q_s)$, the SCNR can be rewritten as
\begin{equation}
    {\rm SCNR}
    =
    \frac{
    |\beta_s|^2
    \left\|\mathbf a_s^{\rm H}\mathbf F\mathbf W\right\|_2^2
    \left|\mathbf u^{\rm H}\mathbf b_r(\mathbf q_s)\right|^2
    }{
    \mathbf u^{\rm H}\mathbf R_{\rm i}\mathbf u
    }.
\end{equation}
Since the scalar
$|\beta_s|^2\left\|\mathbf a_s^{\rm H}\mathbf F\mathbf W\right\|_2^2$
is independent of $\mathbf u$, problem \eqref{u_optimize} is equivalent to
\begin{equation}
    \max_{\|\mathbf u\|_2^2=1}
    \frac{
    \left|\mathbf u^{\rm H}\mathbf b_r(\mathbf q_s)\right|^2
    }{
    \mathbf u^{\rm H}\mathbf R_{\rm i}\mathbf u
    }.
\end{equation}
By the generalized Rayleigh quotient, the optimal direction is proportional to
$\mathbf R_{\rm i}^{-1}\mathbf b_r(\mathbf q_s)$. Normalizing it yields
\eqref{eq:u_optimal}, which completes the proof.

\section{Proof of Proposition 2}
For each RA $n_t$, the spherical-cap oracle is given by \eqref{oracle_D}.
Since the current boresight vector $\mathbf p_{n_t}^{(t)}$ is feasible, i.e., $\mathbf p_{n_t}^{(t)}\in\mathcal C_{\rm cap}$, the optimality of $\mathbf s_{n_t}^{(t)}$ implies
\begin{equation}
\left(\bar{\mathbf g}_{n_t}^{(t)}\right)^T\mathbf s_{n_t}^{(t)}
\ge
\left(\bar{\mathbf g}_{n_t}^{(t)}\right)^T\mathbf p_{n_t}^{(t)} .
\end{equation}
Using the definition $\boldsymbol\Delta_{n_t}^{(t)}=\mathbf s_{n_t}^{(t)}-\mathbf p_{n_t}^{(t)}$, we obtain
\begin{equation}
\left(\bar{\mathbf g}_{n_t}^{(t)}\right)^T
\boldsymbol\Delta_{n_t}^{(t)}
\ge 0 .
\end{equation}
Summing over all subarrays yields
\begin{equation}
\sigma_{\rm FW}^{(t)}
=
\sum_{{n_t}=1}^{{N_t}}
\left(\bar{\mathbf g}_{n_t}^{(t)}\right)^T
\boldsymbol\Delta_{n_t}^{(t)}
\ge 0 .
\end{equation}

For each accepted step size $\rho^{(t)}$, the Armijo sufficient-increase condition gives
\begin{equation}
\tilde{\mathcal G}(\mathbf D^{(t+1)})
\ge
\tilde{\mathcal G}(\mathbf D^{(t)})
+
c_A\rho^{(t)}\sigma_{\rm FW}^{(t)} .
\end{equation}
Since $c_A>0$, $\rho^{(t)}>0$, and $\sigma_{\rm FW}^{(t)}\ge0$, it follows that
\begin{equation}
\tilde{\mathcal G}(\mathbf D^{(t+1)})
\ge
\tilde{\mathcal G}(\mathbf D^{(t)}),
\end{equation}
which proves that the transformed objective value is non-decreasing.

Moreover, each $\mathbf p_{n_t}$ is constrained in the spherical cap $\mathcal C_{\rm cap}$, which is closed and bounded. Therefore, the feasible set $\mathcal C$ is compact. Since the transmit power is finite and the transformed objective $\tilde{\mathcal G}(\mathbf D)$ is continuous over the compact feasible set, $\tilde{\mathcal G}(\mathbf D)$ is upper bounded. Hence, the non-decreasing objective sequence $\{\tilde{\mathcal G}(\mathbf D^{(t)})\}$ is bounded from above and thus convergent. This completes the proof.
\section{RCRB Derivation for Near-Field Sensing}
\label{app:RCRB}

Let $\mathbf X=[\mathbf x(1),\ldots,\mathbf x(T)]\in
\mathbb C^{N_t\times T}$ denote the known probing matrix over $T$
sensing symbols. To decouple the sensing accuracy from a particular
beamformer realization and focus on the antenna-rotation effect, an
equal-power orthogonal probing matrix is adopted, satisfying
\begin{equation}
\frac{1}{T}\mathbf X\mathbf X^H=\frac{P}{N_t}\mathbf I_{N_t}.
\end{equation}

The received echo before receive combining is
\begin{equation}
\mathbf Y
=
\beta_s\mathbf A_s(\boldsymbol{\eta})\mathbf X
+
\sum_{c=1}^{C}\beta_c\mathbf A_c\mathbf X
+
\mathbf N,
\end{equation}
where $\boldsymbol{\eta}=[r_s,\vartheta_s,\phi_s]^T$ is the target
location parameter vector, $\beta_s$ is the unknown complex target
response, and $\mathbf A_s(\boldsymbol{\eta})$ denotes the target
response. The matrix $\mathbf A_c$ is defined
similarly for the $c$-th clutter. By vectorizing $\mathbf Y$, the observation can be written as
\begin{equation}
\begin{aligned}
\mathbf y
&=
\beta_s\mathbf b(\boldsymbol{\eta})+\sum_{c=1}^{C}\beta_c\mathbf b_c+\mathbf n.
\end{aligned}
\end{equation}
where $\mathbf b(\boldsymbol{\eta})=\operatorname{vec}\big(\mathbf A_s(\boldsymbol{\eta})\mathbf X\big)$ and $\mathbf b_c= \operatorname{vec}(\mathbf A_c\mathbf X)$.
The clutter coefficients are modeled as independent zero-mean complex
Gaussian variables with
$\mathbb E\{|\beta_c|^2\}=\sigma_{\beta,c}^2$. Hence, the clutter-plus-noise
term is an effective Gaussian disturbance with covariance
\begin{equation}
\mathbf R_v
=
\sum_{c=1}^{C}\sigma_{\beta,c}^2\mathbf b_c\mathbf b_c^H
+
\sigma_s^2\mathbf I .
\end{equation}

The target response matrix is expressed as
$\mathbf A_s(\boldsymbol{\eta})
=\mathbf a_r(\boldsymbol{\eta})\mathbf a_t^H(\boldsymbol{\eta})$.
Let $\mathbf q_s=\mathbf q(r_s,\vartheta_s,\phi_s)$. For the $n$-th
antenna of array $\ell\in\{t,r\}$, define
$d_{\ell,n}=\|\mathbf q_s-\mathbf t_{\ell,n}\|_2$ and
$\mathbf s_{\ell,n}=(\mathbf q_s-\mathbf t_{\ell,n})/d_{\ell,n}$.
For $\eta_i\in\{r_s,\vartheta_s,\phi_s\}$, their derivatives are
\begin{equation}
\begin{aligned}
\frac{\partial d_{\ell,n}}{\partial \eta_i}
&=
\mathbf s_{\ell,n}^T
\frac{\partial \mathbf q_s}{\partial \eta_i},\\
\frac{\partial \mathbf s_{\ell,n}}{\partial \eta_i}
&=
\frac{\mathbf I_3-\mathbf s_{\ell,n}\mathbf s_{\ell,n}^T}{d_{\ell,n}}
\frac{\partial \mathbf q_s}{\partial \eta_i}.
\end{aligned}
\end{equation}

The transmit and receive near-field responses are
\begin{equation}
[\mathbf a_t^H]_n
=
\gamma_n e^{-j\frac{2\pi}{\lambda}(d_{t,n}-r_s)},\quad
[\mathbf a_r]_m
=
e^{-j\frac{2\pi}{\lambda}(d_{r,m}-r_s)}.
\end{equation}
For FPA, $\gamma_n=1$. For RA,
$\gamma_n=\sqrt{G_0}(\mathbf p_n^T\mathbf s_{t,n})^p$ in the active
region $\mathbf p_n^T\mathbf s_{t,n}>0$. Accordingly, the response
derivatives are given by
\begin{equation}
\begin{aligned}
\frac{\partial [\mathbf a_t^H]_n}{\partial \eta_i}
&=
[\mathbf a_t^H]_n
\left[
\chi_{n,i}
-
j\frac{2\pi}{\lambda}
\left(
\frac{\partial d_{t,n}}{\partial \eta_i}-\iota_i
\right)
\right],\\
\frac{\partial [\mathbf a_r]_m}{\partial \eta_i}
&=
[\mathbf a_r]_m
\left[
-
j\frac{2\pi}{\lambda}
\left(
\frac{\partial d_{r,m}}{\partial \eta_i}-\iota_i
\right)
\right],
\end{aligned}
\end{equation}
where $\iota_i=1$ if $\eta_i=r_s$ and $\iota_i=0$ otherwise. For RA,
$\chi_{n,i}=p(\mathbf p_n^T\partial\mathbf s_{t,n}/\partial\eta_i)/
(\mathbf p_n^T\mathbf s_{t,n})$, while $\chi_{n,i}=0$ for FPA.
Accordingly,
$
\frac{\partial \mathbf b}{\partial \eta_i}
=
\operatorname{vec}\left[
\left(
\frac{\partial \mathbf a_r}{\partial \eta_i}\mathbf a_t^H
+
\mathbf a_r
\frac{\partial \mathbf a_t^H}{\partial \eta_i}
\right)\mathbf X
\right]
$. 

Define $
\dot{\mathbf B}
=
\left[
\frac{\partial \mathbf b}{\partial r_s},
\frac{\partial \mathbf b}{\partial \vartheta_s},
\frac{\partial \mathbf b}{\partial \phi_s}
\right].$ Since $\beta_s$ is unknown, it is treated as a nuisance parameter and
eliminated via the Schur complement. The equivalent FIM of
$\boldsymbol{\eta}$ is
\begin{equation}
\mathbf J_{\boldsymbol{\eta}}
=
2|\beta_s|^2
\Re\left\{
\dot{\mathbf B}^H
\mathbf P_{\mathbf b,\mathbf R_v}^{\perp}
\dot{\mathbf B}
\right\},
\end{equation}
where
$\mathbf P_{\mathbf b,\mathbf R_v}^{\perp}
=
\mathbf R_v^{-1}
-
\mathbf R_v^{-1}\mathbf b
\left(
\mathbf b^H\mathbf R_v^{-1}\mathbf b
\right)^{-1}
\mathbf b^H\mathbf R_v^{-1}$.
For the white effective disturbance case
$\mathbf R_v=\tilde{\sigma}_s^2\mathbf I$, it can be reduced to
\begin{equation}
\mathbf J_{\boldsymbol{\eta}}
=
\frac{2|\beta_s|^2}{\tilde{\sigma}_s^2}
\Re\left\{
\dot{\mathbf B}^H
\mathbf P_{\mathbf b}^{\perp}
\dot{\mathbf B}
\right\},
\quad
\mathbf P_{\mathbf b}^{\perp}
=
\mathbf I-\mathbf b(\mathbf b^H\mathbf b)^{-1}\mathbf b^H .
\end{equation}

Therefore, the RCRBs of interest are
\begin{align}
\mathrm{RCRB}_{r}
&=
\sqrt{[(\mathbf J_{\boldsymbol{\eta}})^{-1}]_{1,1}},\\
\mathrm{RCRB}_{\vartheta}
&=
\sqrt{[(\mathbf J_{\boldsymbol{\eta}})^{-1}]_{2,2}},\\
\mathrm{RCRB}_{\phi}
&=
\sqrt{[(\mathbf J_{\boldsymbol{\eta}})^{-1}]_{3,3}}.
\end{align}

\end{document}

\end{document}